\theoremstyle{definition}
\newtheorem{theorem}{Theorem}
\newtheorem{definition}{{{Definition}}}
\newtheorem{example}{{{Example}}}
\newtheorem{remark}{{{Remark}}}
\newtheorem{corollary}{{{Corollary}}}%[theorem]
\newtheorem{proposition}{{{Proposition}}}
\newtheorem{lemma}{{{Lemma}}}
\newcommand{\C}{\mathcal{C}}
\newcommand{\F}{\mathbb{F}}
\newcommand{\N}{\mathbb{N}}
\newcommand{\wt}{\mathrm{wt}}
\newcommand{\G}{\mathcal{G}}
\newcommand{\dfree}{\mathrm{d_{free}}}
\newcommand{\T}{\mathcal{T}}
\title{Construction of LDPC convolutional codes via \\ difference triangle sets\footnote{This work is an extension of the conference paper ``Construction of Rate $(n-1)/n$ Non-Binary LDPC Convolutional Codes via Difference Triangle Sets'' \cite{alfarano2020construction}, which appears in \emph{2020 IEEE International Symposium on Information Theory (ISIT)}.}}
\author{Gianira N. Alfarano \and Julia Lieb \and Joachim Rosenthal}
\date{Institute of Mathematics, University of Z{u}rich}
\begin{document}
\maketitle

\begin{abstract}
In this paper, a construction of $(n,k,\delta)$ LDPC convolutional codes over arbitrary finite fields, which generalizes the work of Robinson and Bernstein and the later work of Tong is provided. The sets of integers forming a $(k,w)$-(weak) difference triangle set are used as supports of some columns of the sliding parity-check matrix of an $(n,k,\delta)$ convolutional code, where $n\in\N$, $n>k$.  The parameters of the convolutional code are related to the parameters of the underlying difference triangle set. In particular, a relation between the free distance of the code and $w$ is established as well as a relation between the degree of the code and the scope of the difference triangle set. Moreover, we show that some conditions on the weak difference triangle set ensure that the Tanner graph associated to the sliding parity-check matrix of the convolutional code is free from $2\ell$-cycles not satisfying the full rank condition over any finite field. Finally, we relax these conditions and provide a lower bound on the field size, depending on the parity of $\ell$, that is sufficient to still avoid $2\ell$-cycles. This is important for improving the performance of a code and avoiding the presence of low-weight codewords and absorbing sets.
%\keywords{LDPC codes \and Difference triangle sets \and Convolutional codes}
% \PACS{PACS code1 \and PACS code2 \and more}
% \subclass{MSC code1 \and MSC code2 \and more}
\end{abstract}

\section{Introduction}

In the last three decades, the area of channel coding gained a lot of attention, due to the fact that many researchers were attracted by the practical realization of coding schemes whose performances approach the Shannon limit. This revolution started in 1993 with the invention of turbo codes and their decoding algorithms \cite{berrou1993near}. Only few years later, researchers investigated also low-density parity-check (LDPC) block codes and their message passing decoding algorithm. These codes were discovered to be also capable of capacity-approaching performances.  The class of LDPC block codes was introduced by Gallager \cite{gallager1962low}, in 1962. Their name is due to the fact that they have a parity-check matrix that is sparse. The analysis of LDPC codes attracted many researchers and a lot of work arose in this direction, starting from the papers of Wiberg \cite{wiberg1996codes} and Mackay and Neal \cite{mackay1996near}. Moreover, in \cite{richardson2001capacity,chung2001analysis} analytical tools were introduced to investigate the limits of the performance of the message passing iterative decoding algorithm, suggested by Tanner already in 1981, \cite{tanner1981recursive}.

Similarly to LDPC block codes, one can consider  LDPC convolutional codes. These codes are defined as the (right) kernel of a sparse sliding parity-check matrix, which allows to still use iterative message passing decoding algorithms. Moreover, it was proven that LDPC convolutional codes are practical in different communication applications, see for instance \cite{oswald2002capacity,bates2006termination,bates2005low}.

In the last few years, some attempts to construct binary LDPC convolutional codes were done. Two types of constructions were mainly investigated. The first one exploits the similarity of quasi-cyclic block codes and time-invariant LDPC convolutional codes, \cite{tanner1981error,tanner1987convolutional,tanner2004ldpc}. The second one regards mostly time varying convolutional codes, see for instance \cite{zhou2010cycle,pusane2011deriving,battaglioni2019girth}.

The aim of this paper is to give a combinatorial construction of LDPC convolutional codes suitable for iterative deoding. In fact, contrary to LDPC block codes for which a lot of combinatorial constructions have been derived (see for example \cite{rosenthal2000constructions, kou2000construction, kou2001low, johnson2001regular,vasic2001combinatorial,vasic2004combinatorial}), it is rare to use combinatorial tools for constructing LDPC convolutional codes.

In 1967, Robinson and Bernstein \cite{robinson1967class} used difference triangle sets for the first time to construct binary recurrent codes, which are defined as the (right) kernel of a binary sliding matrix. At that time, the theory of convolutional codes was not developed yet and the polynomial notation was not used, but now, we may regard recurrent codes as a first prototype of convolutional codes. This was the first time that a combinatorial object has been used to construct convolutional codes. Three years later, Tong in \cite{tong1970systematic}, used diffuse difference triangle sets to construct self-orthogonal diffuse convolutional codes, defined by Massey \cite{massey1963threshold}.  The aim of these authors was to construct codes suitable for iterative decoding and their result was an adapted version of binary LDPC convolutional codes. In \cite{alfarano2020construction}, the authors constructed $(n,n-1)_q$ LDPC convolutional codes, whose sliding parity-check matrix is free from $4$ and $6$-cycles not satisfying the so called full rank condition, starting from difference triangle sets. This was a generalization of the work of Robinson and Bernstein, in which difference triangle sets were used to construct convolutional codes over the binary field, that can only avoid $4$-cycles. In 1971, Tong \cite{tong1971} was the first to generalize their construction over $\F_q$, using what we call in this paper weak difference triangle sets. However, his construction is suitable only for limited rate and in a way that the Tanner graph associated to the parity-check matrix of these codes is free only from $4$-cycles.

In this paper, we give a construction of LDPC convolutional codes for arbitrary rates over arbitrary fields, using  difference triangle sets and weak difference triangle sets. In particular, the use of the weak version of these combinatorial objects allows to relax the assumptions required by Robinson, Bernstein and Tong. Indeed, instead of considering sets of nonnegative integers where all the pairwise differences are distinct among all the sets, we may require only that the pairwise differences are distinct in each set. Moreover, we show that using difference triangle sets for this construction produces codes with good distance properties and we provide a bound on the field size that is sufficient to have codes with good distance and to avoid the presence of cycles not satisfying the full rank condition.

The paper is structured as follows. In Section \ref{sec:preliminaries}, we start with some background about convolutional codes. Then we define difference triangle sets (DTSs) and weak difference triangle sets (wDTSs) and their scope. Finally, we introduce LDPC block and LDPC convolutional codes over finite fields of arbitrary size. In Section \ref{sec:construction}, we generalize the construction in \cite{alfarano2020construction} for LDPC convolutional codes to arbitrary rates, starting from a weak difference triangle set. 
We show how the parameters of the constructed convolutional code depend on the properties of the weak difference triangle set. We derive some distance properties of the codes and the exact formula for computing their density. 
Moreover, we show that the free distance and the column distances of convolutional codes constructed from a DTS are in some sense optimal. Finally, we present a construction of LDPC convolutional codes from a wDTS and a lower bound on the corresponding field size such that the free distance is at least 3 and such that the corresponding Tanner graph is free from 4 and 6-cycles not satisfying the FRC.
In Section \ref{sec:fieldsize}, we start with some conditions on the underlying wDTS that ensure that the Tanner graph associated to the sliding parity-check matrix is free from any cycle over any finite field. Afterwards, we give a lower bound for the field size sufficient to ensure that our construction provides a Tanner graph free from $2\ell$-cycles not satisfying the FRC, for $\ell$ odd. If $\ell$ is even, we add some assumptions on the wDTS to be able to derive also in this case a lower bound on the field size. Finally, we modify our construction to be able to relax these conditions on the wDTS, which in turn enlarges the underlying field size.

\section{Preliminaries}\label{sec:preliminaries}
In this section, we provide the background and the known results useful for the rest of the paper.

\subsection{Convolutional Codes}

Let $q$ be a prime power and $\F_q$ be the finite field with $q$ elements. Let $k,n$ be positive integers, with $k< n$ and consider the polynomial ring $\F_q[z]$. An $(n,k)_q$ convolutional code is defined as a submodule $\C$ of $\F_q[z]^n$ of rank $k$, such that there exists a polynomial \emph{generator matrix} $G(z)\in \F_q[z]^{k\times n}$, whose rows generate $\C$, i.e. 
$$\C:=\{u(z)G(z)\mid u(z) \in \F_q[z]^k\}\subseteq \F_q[z]^n.$$
If $G(z)$ is \emph{basic}, i.e., it has a right polynomial inverse, there exists a full row-rank \emph{parity-check matrix} $H(z)\in\F_q[z]^{(n-k)\times n}$ 
%with $H_0$ full rank 
such that
$$\C := \{v(z)\in\F_q[z]^n \mid H(z)v(z)^\top={0}\}.$$

Finally, we define the \emph{degree} $\delta$ of the convolutional code $\C$ as the highest degree among the $k\times k$ minors in $G(z)$. When the degree  $\delta$ of the convolutional code is known, we denote $\C$  by $(n,k,\delta)_q$. We omit to specify the field when it is not needed. If $G(z)$ is \emph{reduced}, i.e. the sum of the row degrees of $G(z)$ attains the minimal possible value, then this value is equal to $\delta$.

\begin{lemma}\cite{forney1970,kailath1980}
Let  $H(z)=[h_{i,j}(z)] \in \mathbb F[z]^{(n-k)\times n}$ with row degrees
  $\nu_1,\nu_2, \dots, \nu_{n-k}$ and $[H]_{hr}$ be the highest row degree coefficient matrix defined as the matrix with the $i$-th row  consisting of the coefficients of $z^{\nu_i}$ in the $i$-th row of
  $H(z)$. Then $H(z)$ is reduced if and only  if $[H]_{hr}$ is full row-rank.
\end{lemma}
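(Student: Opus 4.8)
The plan is to route both directions through the \emph{predictable degree property} of row-reduced matrices. Write $h_i(z)$ for the $i$-th row of $H(z)$, so $\deg h_i=\nu_i$ and the $i$-th row of $[H]_{hr}$ is exactly the coefficient vector of $z^{\nu_i}$ in $h_i(z)$. The basic computation I would record first is the following: for any nonzero $u(z)=(u_1(z),\dots,u_{n-k}(z))\in\F[z]^{n-k}$, setting $e:=\max\{\deg u_i+\nu_i:\ u_i\ne 0\}$, the coefficient of $z^{e}$ in $u(z)H(z)=\sum_i u_i(z)h_i(z)$ equals $\sum_{i:\,\deg u_i+\nu_i=e}\lambda_i\,([H]_{hr})_i$, where $\lambda_i\in\F^\ast$ is the leading coefficient of $u_i$. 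In particular $\deg(u(z)H(z))\le e$ always, with equality precisely when that vector is nonzero.

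For the direction ``$[H]_{hr}$ full row rank $\Rightarrow$ $H(z)$ reduced'', note that full row rank of $[H]_{hr}$ means every linear combination of a nonempty subset of its rows with nonzero coefficients is nonzero, so the computation above gives $\deg(u(z)H(z))=\max_i(\deg u_i+\nu_i)$ for every nonzero $u(z)$ (the predictable degree property). Now let $H'(z)$ be any full-row-rank polynomial matrix with the same row module as $H(z)$; then $H'(z)=U(z)H(z)$ with $U(z)\in\GL_{n-k}(\F[z])$ (from $H'=UH$ and $H=VH'$ over $\F[z]$ one gets $VU=I$ by full row rank of $H$). Since $\det U(z)\in\F^\ast$, its Leibniz expansion contains a nonvanishing term, i.e.\ there is a permutation $\sigma$ with $U_{i\sigma(i)}(z)\ne 0$ for all $i$. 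Applying the predictable degree property to the $i$-th (necessarily nonzero) row $U_i(z)$ of $U(z)$ yields $\deg h_i'(z)=\max_j(\deg U_{ij}+\nu_j)\ge \deg U_{i\sigma(i)}+\nu_{\sigma(i)}\ge\nu_{\sigma(i)}$, and summing over $i$ gives $\sum_i\deg h_i'(z)\ge\sum_i\nu_{\sigma(i)}=\sum_i\nu_i$. Hence $H(z)$ minimizes the sum of row degrees among all such matrices, i.e.\ it is reduced.

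For the converse I would argue contrapositively. If $[H]_{hr}$ is not full row rank, pick $c_1,\dots,c_{n-k}\in\F$, not all zero, with $\sum_i c_i([H]_{hr})_i=0$, and let $i_0$ be an index attaining $c_{i_0}\ne 0$ and $\nu_{i_0}=\max\{\nu_i:\ c_i\ne 0\}$. Consider the matrix $M(z)$ equal to the identity except in row $i_0$, whose $(i_0,j)$-entry is $c_jz^{\nu_{i_0}-\nu_j}$ for $j\ne i_0$ and $c_{i_0}$ for $j=i_0$ — all exponents being nonnegative by the choice of $i_0$. A cofactor expansion along row $i_0$ gives $\det M(z)=c_{i_0}\in\F^\ast$, so $M(z)$ is unimodular and $M(z)H(z)$ has the same row module as $H(z)$. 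Its $i_0$-th row is $\sum_i c_iz^{\nu_{i_0}-\nu_i}h_i(z)$, whose coefficient of $z^{\nu_{i_0}}$ is $\sum_i c_i([H]_{hr})_i=0$, so this row has degree $<\nu_{i_0}$; it is not the zero vector, since otherwise $h_{i_0}(z)$ would be an $\F[z]$-combination of the remaining rows, contradicting full row rank of $H(z)$. Thus $M(z)H(z)$ has strictly smaller row-degree sum than $H(z)$, so $H(z)$ is not reduced.

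The only step that needs real care is the first direction: converting the local predictable degree property into the global minimality statement. The mild subtleties are (i) extracting the permutation $\sigma$ from $\det U(z)\in\F^\ast$, and (ii) being clear that ``reduced'' compares $H(z)$ against exactly the matrices $U(z)H(z)$ with $U(z)\in\GL_{n-k}(\F[z])$ — equivalently, the full-row-rank polynomial matrices with the same row module. The remaining ingredients (the leading-coefficient computation, the determinant of $M(z)$, and the non-vanishing of the modified row) are routine.
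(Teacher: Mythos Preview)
The paper does not prove this lemma at all: it is stated as a classical result with citations to Forney (1970) and Kailath (1980), and no argument is given. Your proof via the predictable degree property is the standard one found in those references and is correct; the only implicit hypothesis you use beyond the statement is that $H(z)$ has full row rank, which is how parity-check matrices are set up in the paper and is the usual standing assumption for this characterization.
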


It is well-known that if $H(z)\in\F[z]^{(n-k)\times n}$ is a basic and reduced parity-check matrix of an $(n,k,\delta)$ convolutional code $\C$, then the sum of the row degrees of $H(z)$ is equal to $\delta$ (see \cite{ro01}) which is also equal to the maximal (polynomial) degree of the full-size minors of $H(z)$. If $H(z)$ is basic but not reduced, then the sum of its row degrees is larger than $\delta$.

There is a natural isomorphism between $\F_q[z]^n$ and $\F_q^n[z]$, which extends to the space of matrices and allows to consider a generator and a parity-check matrix of a convolutional code as polynomials whose coefficients are matrices. In particular, we will consider $H(z)\in\F_q^{(n-k)\times n}[z]$, such that $H(z) = H_0+H_1z +\dots H_{\mu}z^\mu$, with $\mu>0$. With this notation, we can expand the kernel representation $H(z)v(z)^\top$ in the following way:
\begin{equation}\label{eq:kerH}
Hv^\top = \begin{bmatrix}
H_0 & & & &  \\
\vdots & \ddots & & &  \\
H_{\mu} & \cdots & H_0 & &  \\
 & \ddots &  &\ddots &  \\
 & & H_{\mu} &\cdots & H_0 \\
 & & & \ddots & \vdots \\
 & & & & H_\mu
\end{bmatrix}\begin{bmatrix}
v_0 \\ v_1 \\ \vdots \\ v_r
\end{bmatrix}=0,
\end{equation}
where $r = \deg(v)$.
We will refer to the representation of the parity-check matrix of $\C$ in equation \eqref{eq:kerH} as \emph{sliding parity-check matrix}. 

Let $v(z)=\sum_{i=0}^r v_iz^i\in \F_q^n[z]$ be a polynomial vector. We define the \emph{weight} of $v(z)$ as the sum of the Hamming weights of its coefficients,  i.e. $\wt(v(z)) := \sum _{i=0}^r \wt_H(v_i)\in\N_0,$ where $\wt_H(v_i)$ denotes the Hamming weight of $v_i\in\F_q^n$. This definition allows to endow a convolutional code $\C\subseteq\F_q[z]^n$ with a distance. We define the \emph{free distance} of $\C$, denoted by $\dfree(\C)$, as the minimum of the nonzero weights of the codewords in $\C$.
%$$\dfree:=\min\{\wt(v(z))\mid v\in\C, v\ne 0\}$$.
The parameters $\delta$ and $\dfree$ are needed to determine respectively the decoding complexity and the error correction capability of a convolutional code with respect to some decoding algorithm. For this reason, for any given $k$ and $n$ and field size $q$, the aim is to construct convolutional codes with ``small'' degree $\delta$ and  ``large'' free distance $\dfree$. 

For any $j \in \N_0$ we define the \emph{$j$-th column distance of $\C$} as
\begin{align*}
d_j^c(\C)&:= \min \biggl\{\wt\biggl(v_0 +v_1z + \dots +v_jz^j\biggr) \mid v(z) 
%= \sum_{i}v_iz^i 
\in \C, \ v_0\neq 0\biggr\}\\ 
&=\min \biggl\{\wt\biggl(v_0 + \dots +v_jz^j\biggr) \mid H_j^c[v_0\cdots v_j]^{\top}=0,\ v_0\neq 0 \biggr\},
\end{align*}

where $H_j^c$ is called $j$-\emph{th truncated parity-check matrix} and it is defined for any $j\in \N_0$ as
$$H_j^c:=\begin{bmatrix}
    H_0 & & & \\
    H_1 & H_0 & &  \\
    \vdots & \vdots & \ddots & \\
    H_{j} & H_{j-1} & \cdots & H_0
    \end{bmatrix}\in\F_q^{(j+1)(n-k)\times (j+1) n}.$$
    
We also recall the following well-known result.
\begin{theorem}\cite[Proposition 2.1]{gluesing2006strongly}\label{cd}
Let $\C\subseteq\F_q[z]^n$ be an $(n,k)_q$ convolutional code. Let $d\in\N$. Then the following properties are equivalent.
\begin{enumerate}
\item $d_j^c = d$.
\item None of the first $n$ columns of $H_j^c$ is contained in the span of any other $d -2$ columns and one of the first $n$ columns of  $H_j^c$  is in the span of some other $d-1$ columns of that matrix.

\end{enumerate} 
\end{theorem}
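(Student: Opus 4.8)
The plan is to translate both conditions into statements about linear dependencies among the columns of $H_j^c\in\F_q^{(j+1)(n-k)\times(j+1)n}$. Denote its columns by $c_1,\dots,c_{(j+1)n}$, so that $c_1,\dots,c_n$ are the ``first $n$ columns'', and call a vector $x=[v_0\ v_1\ \cdots\ v_j]^\top\in\F_q^{(j+1)n}$ \emph{admissible} if $H_j^c x=0$ and $\supp(x)\cap\{1,\dots,n\}\neq\varnothing$. The basic dictionary is: $H_j^c x=0$ is exactly the relation $\sum_{\ell\in\supp(x)}x_\ell c_\ell=0$ among the columns indexed by $\supp(x)$; one has $\wt(v_0+v_1z+\cdots+v_jz^j)=|\supp(x)|$; and $v_0\neq0$ is equivalent to $\supp(x)\cap\{1,\dots,n\}\neq\varnothing$. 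Hence, by the second expression for $d_j^c$ recalled above, $d_j^c$ equals the least cardinality of $\supp(x)$ over admissible $x$, i.e. the least number of columns of $H_j^c$ carrying a linear dependency with a nonzero coefficient on at least one of $c_1,\dots,c_n$.

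For $(1)\Rightarrow(2)$: assume $d_j^c=d$. If some $c_{\ell_0}$ with $\ell_0\le n$ lay in the span of $d-2$ of the other columns, the ensuing relation (coefficient $1$ on $c_{\ell_0}$, at most $d-2$ further nonzero coefficients) would give, through the dictionary, an admissible $x$ with $1\le|\supp(x)|\le d-1$, contradicting $d_j^c=d$; this is the first half of $(2)$. For the second half, pick an admissible $x$ with $|\supp(x)|=d$ and $\ell_0\in\supp(x)\cap\{1,\dots,n\}$; since $x_{\ell_0}\neq0$, solving the relation for $c_{\ell_0}$ exhibits it as a linear combination of the other $d-1$ columns indexed by $\supp(x)\setminus\{\ell_0\}$.

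For $(2)\Rightarrow(1)$: I would show $d_j^c\le d$ and $d_j^c\ge d$ separately. The second half of $(2)$ supplies $c_{\ell_0}$ ($\ell_0\le n$) in the span of $d-1$ other columns; clearing this relation yields an admissible $x$ with $|\supp(x)|\le d$, so $d_j^c\le d$. If on the contrary $d_j^c\le d-1$, choose an admissible $x$ of minimal weight $d_j^c$ and $\ell_0\in\supp(x)\cap\{1,\dots,n\}$; then $c_{\ell_0}$ lies in the span of the at most $d_j^c-1\le d-2$ columns indexed by $\supp(x)\setminus\{\ell_0\}$, hence --- enlarging this set by arbitrary further columns different from $c_{\ell_0}$, which is possible because $(2)$ forces $d\le(j+1)n$ --- in the span of some $d-2$ of the other columns, contradicting the first half of $(2)$. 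Therefore $d_j^c\ge d$, and so $d_j^c=d$.

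The proof is essentially bookkeeping. The points that need care are the two-directional use of the equivalence ``$v_0\neq0$'' $\Longleftrightarrow$ ``the dependency has a nonzero coefficient on one of $c_1,\dots,c_n$'', and the harmless padding step that lets one pass between ``in the span of at most $d-2$ columns'' and ``in the span of some $d-2$ columns''; I do not expect any genuine obstacle beyond these.
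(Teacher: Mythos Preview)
Your argument is correct: the dictionary between admissible kernel vectors of $H_j^c$ and linear dependencies among its columns that involve at least one of the first $n$ columns is exactly the right translation, and both implications go through as you wrote them. The only delicate point is the padding step, and you handled it properly by observing that condition~(2) already forces $d\le (j+1)n$, so there are always enough columns to enlarge a smaller spanning set to one of size exactly $d-2$.

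Note, however, that the paper does \emph{not} give its own proof of this statement: it is quoted verbatim as \cite[Proposition~2.1]{gluesing2006strongly} and used as a black box (in the proof of Theorem~\ref{distance}(ii)). So there is no ``paper's proof'' to compare against; your write-up is simply a self-contained verification of the cited result, and it follows the same line of reasoning one finds in the original source.
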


%-----------------------------------------------------------------------------------------------------------------------------------------------

\subsection{Difference Triangle Sets}
A difference triangle set is a collection of sets of integers such that any integer can be written in at most one way as difference of two elements in the same set. Difference triangle sets find application in combinatorics, radio systems, optical orthogonal codes and other areas of mathematics \cite{klove1989bounds,chee1997constructions,chen1992disjoint}. We refer to \cite{colbourn1996difference} for a more detailed treatment. More formally, we define them in the following way, by distinguishing between weak difference triangle sets and difference triangle sets.

\begin{definition}
Let $N,M$ be positive integers. An $(N,M)$-\emph{weak difference triangle set} (wDTS) is a collection of sets $\T:=\{T_1, T_2, \dots, T_N\}$, where for any $1\leq i\leq N$, $T_i:=\{a_{i,j} \mid 1\leq j\leq M\}$ is a set of nonnegative integers such that $a_{i,1} <a_{i,2} < \cdots <a_{i,M}$ and for $1\leq i\leq N$ the differences $a_{i,j}-a_{i,k}$, with $1\leq k <j \leq M$ are distinct. 
If all the differences in all the sets are distinct, we call $\T$ a $(N,M)$-\emph{difference triangle set} (DTS). 
\end{definition}

An important parameter characterizing an $(N,M)$-(w)DTS $\T$ is the \emph{scope} $m(\T)$, which is defined as
$$ m(\T):= \max\{a_{i,M} \mid 1\leq i \leq N\}.$$ 

A very well-studied problem in combinatorics is finding families of $(N,M)$-DTSs with minimum scope. In this work, we will use the sets in a (w)DTS as supports of some columns of a sliding parity-check matrix of a convolutional code. We will then relate the scope of the (w)DTS with the degree of the code. Since we want to minimize the degree of the code, it is evident that the mentioned combinatorial problem plays a crucial role also here.

The name ``difference triangle" is derived from a way of writing the differences inside the sets composing $\T$ in a triangular form .
\begin{example}[wDTS]
Let $\T =\{\{1,2,4,8\}, \{1,3,7,15\}, \{1,5,10,16\}\}$. Then $\T$ is a $(3,4)$-wDTS.

The ``triangles" associated to $\T$ are the following:
$$ 
\begin{array}{cccccc}
     1 & & 2 & & 4  \\
     & 3 & & 6 &  \\
     & & 7 & & 
\end{array}\qquad \quad \begin{array}{cccccc}
     2 & & 4 & & 8  \\
     & 6 & & 12 &  \\
     & & 14 & & 
\end{array}\qquad \quad \begin{array}{cccccc}
     4 & & 5 & & 6  \\
     & 9 & & 11 &  \\
     & & 15 & & 
\end{array}$$
\end{example}

\begin{example}[DTS]\label{dts}
Let $\T =\{\{1,4,16,20\}, \{1,7,12,14\}, \{1,9,18,19\}\}$. Then $\T$ is a $(3,4)$-DTS.

The ``triangles" associated to $\T$ are the following:
$$ 
\begin{array}{cccccc}
     3 & & 12 & & 4  \\
     & 15 & & 16 &  \\
     & & 19 & & 
\end{array} \qquad\quad \begin{array}{cccccc}
     6 & & 5 & & 2  \\
     & 11 &  & 7 &  \\
     & & 13 & &
\end{array} \qquad\quad \begin{array}{cccccc}
     8 & & 9 & & 1 \\
     & 17 &  & 10 &  \\
     & & 18 & &
\end{array}$$

\end{example}

%-------------------------------------------------------------------------------LDPC-------------------------------- %----------------------------------------------------------

\subsection{LDPC codes over arbitrary finite fields}\label{subsec:NBLDPC}

LDPC codes are known for their performance near the Shannon-limit over the additive white Gaussian noise channel \cite{mackay1996near}. Shortly after they were rediscovered, binary LDPC codes were generalized over arbitrary finite fields. This new construction was first investigated by Davey and Mackay in 1998 in \cite{davey1998low}. In \cite{davey1998monte}, it was observed that LDPC codes defined over a finite field with $q$ elements can have better performances than the binary ones. An LDPC code is defined as the kernel of an $N\times M$ sparse matrix $H$ with entries in $\F_q$. We can associate to $H$ a bipartite graph $\G = (V,E)$, called \emph{Tanner graph}, where $V = V_s \cup V_c$ is the set of vertices. In particular, $V_s=\{v_1,\dots, v_N\}$ is the set of \emph{variable nodes} and $V_c=\{c_1,\dots,c_M\}$ is the set of \emph{check nodes}. $E\subseteq V_s\times V_c$ is the set of edges, with $e_{n,m}=(v_n,c_m)\in E$ if and only if $h_{n,m}\ne 0$. The edge $e_{n,m}$ connecting a check node and a variable node is labelled by $h_{n,m}$, that is the  corresponding \emph{permutation node}. For an even integer $m=2\ell$, we call a simple closed path consisting of $\ell$ check nodes and $\ell$ variable nodes in $\G$ an \emph{$m$-cycle}. The length of the shortest cycle is called the \emph{girth} of $\G$ or girth of $H$. It is proved that higher the girth is, the lower the decoding failure of the bit flipping algorithm is. Moreover, in \cite{poulliat2008design} the authors showed that short cycles in an LDPC code may be harmful if they do not satisfy the so called full rank condition (FRC). This is because if the FRC is not satisfied, the short cycles produce low-weight codewords or they form absorbing sets, \cite{amiri2014analysis}.

Moreover, in \cite{poulliat2008design} and in \cite{amiri2014analysis} it is shown that an $m$-cycle, with $m=2\ell$ in an LDPC code with parity-check matrix $H$ can be represented, up to permutations, by an $\ell\times \ell$ submatrix of $H$ of the form

\begin{equation}\label{eq:cycles}
A=\begin{bmatrix}
a_1 & a_2 & 0 & \cdots &\cdots & 0 \\
0 & a_3 & a_4 & \cdots  &\cdots & \vdots\\
\vdots & & \ddots & & & \vdots \\
\vdots & & & \ddots & & \vdots\\
0 & & & & a_{2\ell-3} & a_{2\ell-2} \\
a_{2\ell} & 0 & \cdots & \cdots &0 & a_{2\ell-1}
\end{bmatrix},
\end{equation}
where $a_i\in\F_q^\ast$. The cycle does not satisfy the FRC if the determinant of $A$ is equal to $0$. In this case, the cycle gives an absorbing set. Hence, it is a common problem to construct LDPC codes in which the shortest cycles satisfy the FRC.

In this work, we are interested in the convolutional counterpart of LDPC block codes, which is given by convolutional codes defined over a finite field $\F_q$ as kernel of a sparse sliding parity-check matrix (here with sparse we mean that in particular each $H_i$ is sparse).

\section{Construction of LDPC convolutional codes}\label{sec:construction}
In this section, we use difference triangle sets to construct LDPC convolutional codes over $\F_q$. The construction was provided for $(n,n-1)_q$ convolutional codes in \cite{alfarano2020construction}. Here, we generalize it for arbitrary $n$ and $k$.

We will construct a sliding parity-check matrix $H$ as in \eqref{eq:kerH}, whose kernel defines a convolutional code. Due to the block structure of $H$, it is enough to consider
\begin{equation}\label{eq:slidingportion}
  \mathcal{H}:= H_{\mu}^c=\begin{bmatrix}
    H_0 & & & \\
    H_1 & H_0 & &  \\
    \vdots & \vdots & \ddots & \\
    H_{\mu} & H_{\mu-1} & \cdots & H_0
    \end{bmatrix},
\end{equation}
since $H$ is then constructed by sliding it. It is easy to see that $H$ does contain a cycle of length $2\ell$ not satisfying the FRC if and only if $\mathcal{H}$ does. 
Assuming that $H_0$ is full rank, we can perform Gaussian elimination on the matrix $$\begin{bmatrix}
H_0\\ H_1\\ \vdots \\H_{\mu}
\end{bmatrix},$$
which results in the block matrix
\begin{equation}\label{eq:matrixH}
\bar{H}=\begin{bmatrix}
    A_0 & | & I_{n-k}  \\
    A_1 & | & 0 \\
    \vdots & & \vdots \\
    A_{\mu} & | & 0
\end{bmatrix},\end{equation}
with $A_i\in \F_q^{(n-k)\times k}$ for $i=1,\hdots, \mu$. 
With abuse of notation, we write $H_0$ for $[A_0|I_{n-k}]$, and $H_i$ for the matrices $[A_i | 0]$.

\begin{remark}\label{rem:H(z)basic}
If we define the matrix $\tilde{H}(z)=\sum_{i=0}^{\mu}A_{i}z^i\in\mathbb F_q[z]^{(n-k)\times k}$, then we obtain that $  H(z)=[\tilde{H}(z)\ I_{n-k}]$ and hence $H(z)$ has a polynomial right inverse, i.e. $H(z)$ is basic.
\end{remark}

Given $n\in\N$, with the following definition we describe how we construct the above mentioned matrix $\bar{H}$ from a $(k, w)$-wDTS, which then will define an $(n,k)_q$ convolutional code.

\begin{definition}\label{def:construction}
Let $k,n$ be positive integers with $n>k$ and $\T:=\{T_1, \dots, T_{k}\}$ be a $(k, w)$-wDTS with scope $m(\T)$. 
Set $\mu=\left\lceil\frac{m(\T)}{n-k}\right\rceil-1$ and define the matrix $\bar{H}\in\F_q^{(\mu+1)(n-k)\times n}$, in which 
%In particular, consider $T_k :=\{a_{k,1}, \dots, a_{k,w}\}$, 
the $l$-th column has weight $w$ and support $T_l$, i.e. for any $1\leq i \leq (\mu+1)(n-k)$ and $1\leq l\leq k$, 
$\bar{H}_{i,l}\neq 0$ if and only if $i \in T_l$. 
We say that $\bar{H}$ has support $\T$.
The last $n-k$ columns of $\bar{H}$ are given by $[I_{n-k},0_{n-k},\dots, 0_{n-k}]^\top$.
Derive the matrix $\mathcal{H}$ by ``shifting" the columns of $\bar{H}$ by multiples of $n-k$ and then a sliding matrix $H$ of the form of equation \eqref{eq:kerH}. Finally, define $\C := \ker(\mathcal{H})$ over $\F_q$.
\end{definition}

Observe that if $k=n-1$, we simply get the construction provided in \cite[Definition 4]{alfarano2020construction}.

\begin{proposition}\label{deg}
Let $n,k,w$ be positive integers with $n>k$, $\T$ be a $(k,w)$-wDTS with scope $m(\T)$ and set $\mu=\left\lceil\frac{m(\T)}{n-k}\right\rceil-1$. If $\bar{H}$ has support $\T$, then the corresponding code is an $(n,k,\delta)$ convolutional code with $\mu\leq\delta\leq\mu(n-k)$. Moreover $H_{\mu}$ is full rank if and only if $\delta=\mu(n-k)$. 
\end{proposition}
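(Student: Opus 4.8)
The plan is to establish the three claims separately: that the code has the stated parameters $(n,k)_q$, that the degree $\delta$ satisfies $\mu \le \delta \le \mu(n-k)$, and that $H_\mu$ is full rank exactly when the upper bound is attained. First I would verify that $\mathcal H$ (equivalently the sliding matrix $H$) has the right size and that $H(z)$ is a legitimate parity-check matrix. By Remark~\ref{rem:H(z)basic}, after Gaussian elimination we may write $H(z) = [\tilde H(z)\ I_{n-k}]$ with $\tilde H(z) = \sum_{i=0}^\mu A_i z^i$, so $H(z)$ is basic and has full row rank $n-k$; hence its kernel $\C$ is an $(n,k)_q$ convolutional code. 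The choice $\mu = \lceil m(\T)/(n-k)\rceil - 1$ is exactly what is needed so that, after shifting the columns of $\bar H$ by multiples of $n-k$, the nonzero entries (whose row indices lie in $\bigcup_i T_i \subseteq \{1,\dots,m(\T)\}$) fit into $\mu+1$ block-rows, i.e. $H(z)$ genuinely has degree at most $\mu$ in $z$. Equality $\deg_z H(z) = \mu$ holds because the scope is attained by some $T_i$ with $a_{i,w} > (\mu)(n-k)$... actually $a_{i,w} = m(\T) > \mu(n-k)$ when $n-k \nmid m(\T)$, and $= (\mu+1)(n-k)$ otherwise; in either case some entry of $A_\mu$ is nonzero, so $\mu$ is really the degree of the polynomial matrix.

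For the degree bounds, recall from the excerpt that for a basic parity-check matrix $H(z)$ of an $(n,k,\delta)$ code, $\delta$ equals the sum of the row degrees when $H(z)$ is additionally reduced, and in general $\delta \le \sum_i \nu_i$ where $\nu_i$ are the row degrees, with $\delta$ equal to the maximal degree of the full-size minors of $H(z)$. Since $H(z) = [\tilde H(z)\ I_{n-k}]$, every full-size $(n-k)\times(n-k)$ minor is, up to sign, a minor of $\tilde H(z)$ padded with columns of the identity; the ones involving the most columns from $\tilde H(z)$ give the determinant $\det(\tilde H(z))$ itself (choosing all $n-k$ columns of $\tilde H(z)$), whose degree is at most $\mu(n-k)$ since each entry of $\tilde H(z)$ has degree at most $\mu$. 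Hence $\delta \le \mu(n-k)$. For the lower bound $\delta \ge \mu$: the row of $\bar H$ indexed by $m(\T)$ (the row realizing the scope) is nonzero only because some $a_{i,w} = m(\T)$; tracking this through the shift, some column of $\tilde H(z)$ has a nonzero entry of degree exactly $\mu$ (or degree $\mu$ appears in column position forcing a row degree $\ge$ something). More cleanly: at least one row of $H(z)$ has degree $\ge 1$ contribution from level $\mu$, and since the $\mu$-th coefficient matrix $H_\mu = [A_\mu\ 0]$ is nonzero, at least one full-size minor — namely one using the column of $\tilde H(z)$ where $A_\mu$ is nonzero together with $n-k-1$ columns of the identity — has degree exactly $\mu$. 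Therefore $\delta \ge \mu$.

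For the last claim, I would use Theorem~\ref{cd}-style reasoning on minors, or more directly: $\delta$ is the maximal degree of a full-size minor of $H(z)$. A full-size minor picking columns $S$, with $|S \cap \{1,\dots,k\}| = t$ columns from $\tilde H(z)$ and $n-k-t$ columns of the identity block, equals (up to sign) the $t\times t$ minor of $\tilde H(z)$ on those rows/columns; its degree is at most $t\mu$, and $t \le n-k$. The degree $\mu(n-k)$ is achieved only if we can take $t = n-k$, i.e. use all columns of $\tilde H(z)$, and moreover the coefficient of $z^{\mu(n-k)}$ in $\det \tilde H(z)$ is nonzero — and that coefficient is precisely $\det(A_\mu)$, since the top-degree term of each entry $\tilde H(z)_{i,j}$ is the $(i,j)$-entry of $A_\mu z^\mu$. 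Thus $\delta = \mu(n-k)$ iff $\det(A_\mu) \ne 0$ iff $A_\mu$ has rank $n-k$ iff $H_\mu = [A_\mu\ 0]$ is full (row) rank. Conversely if $H_\mu$ is not full rank then $\det(A_\mu)=0$ and the top-degree coefficient of every full-size minor vanishes, forcing $\delta \le \mu(n-k)-1 < \mu(n-k)$ — wait, one must be slightly careful: a smaller minor ($t < n-k$) could still not help since $t\mu \le (n-k-1)\mu < \mu(n-k)$. So the equivalence holds.

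The main obstacle I expect is the careful bookkeeping connecting the combinatorial scope $m(\T)$ and the shift-by-$(n-k)$ operation to the genuine $z$-degree and row degrees of the polynomial matrix $H(z)$ — in particular making precise the claim that the coefficient of the extremal power of $z$ in $\det\tilde H(z)$ is exactly $\det A_\mu$ (this needs that the leading coefficient matrix $H_\mu$ only affects the top-degree term, which is clear, but the argument that \emph{no cancellation} occurs among the $(n-k)!$ terms of the determinant requires observing that each contributes a term of the form $(\text{entry of }A_\mu)\,z^\mu$ at the top, so the sum is exactly $\det(A_\mu) z^{\mu(n-k)}$). Everything else is routine once the dictionary between $\bar H$, $\mathcal H$, and $H(z)$ is set up.
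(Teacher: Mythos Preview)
Your argument for the bounds $\mu\le\delta\le\mu(n-k)$ is essentially the same as the paper's: both use that $H(z)=[\tilde H(z)\ I_{n-k}]$ is basic (Remark~\ref{rem:H(z)basic}), so $\delta$ equals the maximal degree of the full-size minors; the upper bound is then immediate, and the lower bound comes from the minor that pairs a degree-$\mu$ column of $\tilde H(z)$ with $n-k-1$ columns of the identity.

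For the ``$H_\mu$ full rank $\Longleftrightarrow$ $\delta=\mu(n-k)$'' part you take a genuinely different route. The paper argues via the Forney--Kailath lemma: if $H_\mu$ is full rank then $[H]_{hr}=H_\mu$, so $H$ is reduced and $\delta$ equals the sum of the row degrees, namely $\mu(n-k)$; if $H_\mu$ is not full rank, the paper splits into the cases ``$H_\mu$ has a zero row'' and ``$H_\mu$ has no zero row'' to conclude $\delta<\mu(n-k)$. You instead compute the top-degree coefficient of the relevant minors directly. Your approach is more elementary (it avoids the reduced/row-degree machinery), while the paper's approach makes the role of $[H]_{hr}$ explicit and connects better to the standard convolutional-code toolkit.

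There is, however, a dimension slip in your write-up that you should fix. The matrix $\tilde H(z)$ has size $(n-k)\times k$, not $(n-k)\times(n-k)$, and likewise $A_\mu\in\F_q^{(n-k)\times k}$; so ``$\det\tilde H(z)$'' and ``$\det A_\mu$'' are not defined in general. What you actually need is: a full-size minor of $H(z)$ using $t$ columns from $\tilde H(z)$ has degree at most $t\mu$, so degree $\mu(n-k)$ forces $t=n-k$ (hence implicitly $k\ge n-k$), and the coefficient of $z^{\mu(n-k)}$ in such a minor is the corresponding $(n-k)\times(n-k)$ minor of $A_\mu$. Then $\delta=\mu(n-k)$ iff \emph{some} maximal minor of $A_\mu$ is nonzero, i.e.\ $A_\mu$ has rank $n-k$, i.e.\ $H_\mu=[A_\mu\ 0]$ has full row rank. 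With this correction your argument goes through; note also that when $k<n-k$ the matrix $H_\mu$ can never have full row rank (cf.\ the remark following the proposition in the paper), which is consistent with your corrected version since then no $(n-k)\times(n-k)$ minor of $A_\mu$ exists.
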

\begin{proof}
  As the matrix $H(z)$ defined in Remark \ref{rem:H(z)basic} is basic, $\delta$ is the maximal degree of the full-size minors of $H$, which is clearly upper bounded by $\mu(n-k)$. Moreover, any minor formed by a column with degree $\mu$ and suitable columns of the systematic part of $H$ has degree $\mu$, which proves the lower bound. 
  
  If $H_{\mu}$ is full rank, it is equal to $[H]_{hr}$, and $H$ is reduced. Hence, $\delta$ is equal to the sum of the $n-k$ row degrees that are all equal to $\mu$, i.e. $\delta=\mu(n-k)$. If $H_{\mu}$ is not full rank, there are two possible cases. First, if $H_{\mu}$ contains no all-zero row, then $[H]_{hr}=H_{\mu}$ is not full rank, and hence $\delta$ is strictly smaller than the sum of the row degrees which is $\mu(n-k)$. Second, if $H_{\mu}$ contains a row of zeros, then the sum of the row degrees of $H$ is strictly smaller than $\mu(n-k)$ and thus, also $\delta$ is strictly smaller than $\mu(n-k)$.  
\end{proof}

\begin{remark}
If $k<n-k$, i.e. the rate of the code is smaller than $1/2$, then \eqref{eq:matrixH} implies that $H_{\mu}$ cannot be full rank. Moreover, in this case, $[H]_{hr}$ can only be full rank if at least $n-2k$ row degrees of $H$ are zero. 
\end{remark}

\begin{proposition}\label{density}
Let $n,k,w$ be positive integers with $n>k$ and $\T$ be a $(k,w)$-wDTS. Assume $\bar{H}$ has support $\T$ and consider the convolutional code $\C$ constructed as kernel of the sliding parity-check matrix corresponding to $\bar{H}$. If $N$ is the maximal codeword length, i.e. for any codeword $v(z)\in\C$, $\deg(v)+1\leq N/n$, then the sliding parity-check matrix corresponding to $\bar{H}$ has density 
$$\frac{wk+n-k}{(n-k)(\mu n+N)}.$$ 
\end{proposition}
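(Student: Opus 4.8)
The plan is to count nonzero entries and total entries in the sliding parity-check matrix $H$ truncated to codewords of length at most $N/n$ blocks, and then take the ratio. First I would fix the codeword-length bound: a codeword $v(z)$ with $\deg(v)+1 \le N/n$ has $N/n$ coefficient blocks $v_0, \dots, v_{N/n - 1}$, so by the structure of \eqref{eq:kerH}, the relevant portion of the sliding matrix $H$ has $N/n$ block-columns, each contributing $n$ scalar columns, for a total of $N$ scalar columns. The number of block-rows is $N/n - 1 + \mu + 1 = N/n + \mu$ (the last block-column $v_{N/n-1}$ is multiplied by $H_0$ sitting in block-row $N/n + \mu - 1$, counting from $0$), wait — more carefully, from \eqref{eq:kerH} the matrix acting on $r+1$ blocks has $r + \mu + 1$ block-rows; here $r + 1 = N/n$, so there are $N/n + \mu$ block-rows, each with $n-k$ scalar rows, giving $(n-k)(N/n + \mu)$ scalar rows. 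Hence the total number of scalar entries is $N \cdot (n-k)(N/n + \mu) = (n-k)(N + \mu n) \cdot (N/n) \cdot (1/1)$; I'd simplify to get the denominator $(n-k)(\mu n + N)$ after pulling out the factor $N/n$ — so I should track the per-block normalization carefully, since the stated denominator is $(n-k)(\mu n + N)$, not multiplied by $N/n$.

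Next I would count the nonzero entries. By Definition \ref{def:construction}, the matrix $\bar H$ (equivalently one block-column of $\mathcal H$, i.e. the column strip $[H_0^\top, H_1^\top, \dots, H_\mu^\top]^\top$) has its first $k$ columns of weight $w$ each, contributing $wk$ nonzeros, and its last $n-k$ columns forming $[I_{n-k}, 0, \dots, 0]^\top$, contributing $n-k$ nonzeros; so each block-column of the sliding matrix carries $wk + (n-k)$ nonzero entries. Since $H$ is obtained by sliding this strip, and every block-column appears in full inside the truncated matrix (here I should check the boundary block-columns are not cut off — they are full because the sliding matrix \eqref{eq:kerH} repeats the same strip $[H_0, \dots, H_\mu]$ in each block-column, and truncating to $N/n$ block-columns keeps each such strip intact), there are $N/n$ block-columns, hence $(N/n)(wk + n - k)$ nonzero entries in total.

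Then the density is the ratio
\begin{equation*}
\frac{(N/n)(wk + n - k)}{(N/n)\,(n-k)(\mu n + N)} = \frac{wk + n - k}{(n-k)(\mu n + N)},
\end{equation*}
which is the claimed formula; the factor $N/n$ cancels cleanly, which is a good consistency check that the counting is right.

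The main obstacle I expect is the bookkeeping at the boundaries of the truncated sliding matrix: one must be sure that truncating to $N$ scalar columns does not clip any block-column's strip $[H_0, \dots, H_\mu]^\top$ (so that the nonzero count $wk+n-k$ per block-column is exact, not an over- or under-count) and that the block-row count is exactly $N/n + \mu$. Getting the off-by-one in the number of block-rows right — i.e. that $r+1 = N/n$ blocks of $v$ forces $N/n + \mu$ block-rows in \eqref{eq:kerH} — is the one place a careless index shift would change the denominator from $(n-k)(\mu n + N)$ to something slightly different. Everything else is a direct transcription of Definition \ref{def:construction}.
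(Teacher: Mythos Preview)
Your proposal is correct and follows exactly the approach the paper takes: the paper's own proof merely states that one divides the number of nonzero entries by the total number of entries and that ``the result follows immediately,'' leaving the explicit counting to the reader. Your detailed bookkeeping (the $N/n$ block-columns each carrying $wk+n-k$ nonzeros, the $N/n+\mu$ block-rows of height $n-k$, and the clean cancellation of the factor $N/n$) is precisely that omitted computation.
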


\begin{proof}
To compute the density of a matrix, one has to divide the number of nonzero entries by the total number of entries. The result follows immediately.  
\end{proof}

\begin{theorem}\label{distance}
%Let $\mathcal{T}$ be an $(k,w)$-DTS and consider the matrix %
Let $\C$ be an $(n,k)$ convolutional code with parity-check matrix $H$. Assume that all the columns of 
$\begin{bmatrix}
    A_0^\top & \cdots &    A_{\mu}^\top
\end{bmatrix}^\top$ defined as in \eqref{eq:matrixH}
have weight $w$ and 
denote by $w_j$ the minimal column weight of $\begin{bmatrix}
    A_0^\top & \cdots &    A_{j}^\top
\end{bmatrix}^\top$. For $I\subset\{1,\hdots,(n-k)(\mu+1)\}$ and $J\subset\{1,\hdots,n(\mu+1)\}$ we define $[\mathcal{H}]_{I;J}$ as the submatrix of $\mathcal{H}$ with row indices $I$ and column indices $J$. Assume that for some $\tilde{w}\leq w$ all $I, J$ with $|J|\leq |I|\leq \tilde{w}$ and $j_1:=\min(J)\leq k$ and $I$ containing the indices where column $j_1$ is nonzero, we have that the first column of $[\mathcal{H}]_{I;J}$ is not contained in the span of the other columns of $[\mathcal{H}]_{I;J}$. Then 
\begin{itemize}
    \item[(i)] $\tilde{w}+1\leq\dfree(\C)\leq w+1$,
    \item[(ii)] $\min(w_j,\tilde{w})+1\leq d_j^c(\C)\leq w_j+1$.
\end{itemize}
\end{theorem}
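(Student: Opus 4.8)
The plan is to derive both statements from Theorem \ref{cd} (the characterization of column distances via spans of columns of the truncated parity-check matrix), together with the structural features of $\mathcal{H}$ built from the wDTS. First I would prove (ii), since (i) will follow by letting $j\to\infty$ (more precisely, by applying (ii) to $j=\mu+r$ for $r$ large enough that $d^c_j$ stabilizes at $\dfree$, using that $d^c_j$ is non-decreasing and eventually equals $\dfree$ for a code with basic, and hence non-catastrophic, parity-check matrix — recall $H(z)$ is basic by Remark \ref{rem:H(z)basic}).

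For the \emph{upper bound} in (ii): pick a column index $j_1\le k$ (in the first block) whose column in $\begin{bmatrix}A_0^\top&\cdots&A_j^\top\end{bmatrix}^\top$ achieves the minimal weight $w_j$. By construction this column has exactly $w_j$ nonzero entries, all lying in rows indexed by the scope-truncated support of $T_{j_1}$; in the full $\mathcal{H}$ one adjoins the corresponding systematic columns (from the identity blocks, shifted) to zero out these $w_j$ nonzero entries one at a time. This exhibits a linear dependence among $j_1$ together with $w_j$ further columns, so by Theorem \ref{cd}(2) the first $n$ columns of $H_j^c$ lie in the span of $\le w_j$ others, giving $d^c_j\le w_j+1$. (For (i): the same argument with the globally minimal-weight column of $\begin{bmatrix}A_0^\top&\cdots&A_\mu^\top\end{bmatrix}^\top$, which has weight $w$ since \emph{all} columns have weight $w$, yields $\dfree\le w+1$.)

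For the \emph{lower bound} in (ii): suppose for contradiction that some column among the first $n$ of $H_j^c$ lies in the span of at most $\min(w_j,\tilde w)-1$ other columns. Any such minimal dependency must actually involve a column in the first block (index $j_1\le k$), because the systematic columns alone are independent and a dependency confined to blocks with index $>1$ translates, by the sliding/shift structure, to one in an earlier window — so without loss of generality $j_1=\min(J)\le k$. Restrict the dependency to the row-support $I$ of column $j_1$: since column $j_1$ has weight $\ge w_j$ in the truncated matrix, $|I|\ge w_j$, but the number of columns involved is $<\min(w_j,\tilde w)\le\tilde w$, so $|J|\le|I|\le\tilde w$ can be arranged (shrinking $I$ to exactly the support of $j_1$), and the hypothesis of the theorem says precisely that the first column of $[\mathcal{H}]_{I;J}$ is \emph{not} in the span of the others — a contradiction. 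Hence no such short dependency exists, and every dependency on a first-block column uses $\ge\min(w_j,\tilde w)$ columns, so $d^c_j\ge\min(w_j,\tilde w)+1$ by Theorem \ref{cd}. Running the identical argument with $w$ in place of $w_j$ (valid since $\tilde w\le w$ and all columns have weight $w$) gives $\dfree\ge\tilde w+1$, completing (i).

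The main obstacle I anticipate is making rigorous the reduction "a minimal linear dependency among columns of $\mathcal{H}$ may be assumed to involve a column from the first block of columns (index $\le k$)." This needs a careful bookkeeping argument using the block-Toeplitz/sliding structure of $H$ in \eqref{eq:kerH}: if the leftmost column appearing in a dependency sits in the $t$-th block of columns with $t\ge 2$, one shifts the whole window up by $n-k$ rows and $n$ columns and reindexes to land in the $(t-1)$-st block, shrinking $J$'s starting index; iterating reaches $t=1$. One must check that this shift does not destroy the weight-$w_j$ (resp. weight-$w$) property of the relevant column and does not enlarge $|I|$ beyond $\tilde w$ — both hold because the nonzero pattern of each non-systematic column is an exact translate of the support $T_{j_1}$, truncated to the scope, and $\mu=\lceil m(\T)/(n-k)\rceil-1$ is chosen so that this support fits inside one sliding window.
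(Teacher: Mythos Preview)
Your approach is essentially the same as the paper's, just reorganized: the paper proves (i) first by a direct codeword argument (explicitly writing down a codeword of weight $w+1$ for the upper bound, and assuming a codeword of weight $d\le\tilde w$ for the lower bound to reach a contradiction via the submatrix hypothesis), and then says (ii) follows by the analogous reasoning combined with Theorem~\ref{cd}. Your choice to route everything through Theorem~\ref{cd} and do (ii) first is a harmless reordering; the column-span language and the codeword language are equivalent via $\mathcal H v^\top=0$.

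One comment on the obstacle you flag at the end: the reduction ``a minimal dependency may be assumed to involve a column from the first block with index $\le k$'' is much easier than the sliding/shift bookkeeping you outline. The paper handles it in one line: since the code is a module you may divide any nonzero codeword by the appropriate power of $z$ so that $v_0\neq 0$; then $H_0v_0^\top=0$ with $H_0=[A_0\,|\,I_{n-k}]$ forces $v_0$ to have a nonzero entry in one of the first $k$ positions (otherwise the identity block would give $v_0=0$ there too). This immediately puts $j_1=\min(J)\le k$ without any reindexing of the window, and bypasses the checks about truncated supports you were worried about.
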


\begin{proof}
(i) Without loss of generality, we can assume that the first entry in the first row of $H_0$ is nonzero. Denote the first column of $\mathcal{H}$ by $[h_{1,1},\hdots,h_{1,(n-k)\mu}]^\top$.
Then, $v(z)=\sum_{i=0}^r v_iz^i$ with 
\begin{align*}
    v_0&=[1\ 0\cdots 0\ -h_{1,1}\cdots\ -h_{1,(n-k)}]\quad \textnormal{ and } \\
    v_i&=[0\ 0\cdots 0\ -h_{1,(n-k)i+1}\cdots\ -h_{1,(n-k)(i-1)}],
\end{align*}
for $i\geq 1$ is a codeword with $\wt(v(z))=w+1$ as the weight of the first column of $\mathcal{H}$ is equal to $w$. Hence $\dfree\leq w+1$.

Assume by contradiction that there exists a codeword $v(z)\neq 0$ with weight $d\leq \tilde{w}$. We can assume that $v_0\neq 0$, i.e. there exists $i\in\{1,\ldots,n\}$ with $v_{0,i}\neq 0$. We know that $\mathcal{H} v^{\top}=0$ and from \eqref{eq:matrixH} we obtain that there exists $j\in\{1,\ldots,n\}$ with $j\neq i$ and $v_{0,j}\neq 0$ and we can assume that $i\leq k$. 

Now, we consider the homogeneous system of linear equations given by $\mathcal{H} v^{\top}=0$ and we only take the rows, i.e. equations, where column $i$ of $\mathcal{H}$ has nonzero entries. Moreover, we define $\tilde{v}\in\mathbb F^d$ as the vector consisting of the nonzero components of $v_0, v_1, \ldots, v_{\deg(v)}$. We end up with a system of equations of the form  $[\mathcal{H}]_{I;J}\tilde{v}^{\top}=0$ where $[\mathcal{H}]_{I;J}$ fulfills the assumptions stated in the theorem. But this is a contradiction as $\tilde{v}^{\top}$ has all components nonzero and therefore $[\mathcal{H}]_{I;J}\tilde{v}^{\top}=0$ implies that the first column of $[\mathcal{H}]_{I;J}$ is contained in the span of the other columns of this matrix.\\
%We can view the equation $\mathcal{H}^\T v^{\top}=0$ as a homogeneous system of linear equations where we consider the nonzero components of $v_0, v_1, \ldots, v_{\deg(v)}$ as variables. Of this system we take only the rows where column $i$ of $\mathcal{H}^\T$ has nonzero entries. We end up with a homogeneous system with $w$ equations and $d$ variables, whose coefficient matrix has full column rank according to the assumptions of the theorem. This implies $v=0$, what is a contradiction.\\
(ii) The result follows from Theorem \ref{cd} with an analogue reasoning as in part (i).  
\end{proof}

\begin{remark}\label{rem:dist}
With the assumptions of Theorem \ref{distance}, if $\tilde{w}=w$, one has $d_j^c=\dfree$ for $j\geq\mu$. Moreover, if $\bar{H}$ has support $\T$, one achieves higher column distances (especially for small $j$) if the elements of $\mathcal{T}$ are small.
\end{remark}

\begin{corollary}\label{s}
If $\mathcal{T}$ is a $(k,w)$-DTS and $\C$ is an $(n,k)$ convolutional code constructed from $\T$ as in Definition \ref{def:construction}, then one has that:
\begin{itemize}
    \item[(i)] $\dfree(\C)= w+1$,
    \item[(ii)] $ d_j^c(\C)= w_j+1$.
\end{itemize}
\end{corollary}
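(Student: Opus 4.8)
The plan is to deduce Corollary \ref{s} directly from Theorem \ref{distance} by verifying that a DTS satisfies the hypothesis of that theorem with the optimal choice $\tilde{w}=w$. Recall the upper bounds $\dfree(\C)\le w+1$ and $d_j^c(\C)\le w_j+1$ hold in full generality under Theorem \ref{distance}, so the entire task is to establish the matching lower bounds. By part (i) of Theorem \ref{distance}, once $\tilde w = w$ is admissible we get $\dfree(\C)\ge w+1$, and by part (ii) together with Remark \ref{rem:dist} (or directly) we get $d_j^c(\C)\ge \min(w_j,w)+1 = w_j+1$ since $w_j\le w$ always. So the real content is the single combinatorial claim: if $\T$ is a $(k,w)$-DTS and $I,J$ are index sets with $|J|\le|I|\le w$, $j_1:=\min(J)\le k$, and $I$ is exactly the set of row indices where column $j_1$ of $\mathcal H$ is nonzero, then the first column of $[\mathcal H]_{I;J}$ is not in the span of the remaining columns.

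To prove this, I would argue by contradiction and exploit the rigid shift structure of $\mathcal H$ together with the defining property of a DTS: all pairwise differences across \emph{all} the triangles are distinct. Suppose the first column $c_{j_1}$ of $[\mathcal H]_{I;J}$ lies in the span $\sum_{t\ge 2}\lambda_t c_{j_t}$ of the others. Since $|I|=w$ (the column $j_1$ of $\bar H$, hence of each diagonal block of $\mathcal H$, has exactly $w$ nonzero entries) and the restriction is to precisely the $w$ rows where column $j_1$ is supported, the key point is a \emph{counting/injectivity} argument: each other column $j_t$ appearing with $\lambda_t\ne 0$ is a shifted copy of some triangle column $T_{l}$ (or of a systematic column), and the condition that the linear combination is supported \emph{only} on $I$ forces each such column to overlap the support $I$ in a very constrained way. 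Concretely, two columns of $\mathcal H$ arising from triangles $T_a$ and $T_b$ (possibly $a=b$) shifted by amounts $s$ and $s'$ have supports $s+T_a$ and $s'+T_b$; if these two sets share two common elements, then a difference within $T_a$ equals a difference within $T_b$, which by the DTS property forces $a=b$ and the same pair of indices, i.e. $s=s'$. Hence distinct columns of $\mathcal H$ intersect in at most one row. Since column $j_1$ is supported on the $w$ rows of $I$, and there are at most $w-1$ other columns available (as $|J|\le w$), one cannot cover all $w$ ones of column $j_1$ using single-row overlaps while simultaneously cancelling — each other column hits $I$ in at most one position and must also vanish off $I$, which is impossible unless it hits $I$ in zero positions, giving a contradiction to $c_{j_1}\in\mathrm{span}$. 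The systematic columns (the last $n-k$ columns of $\bar H$ and their shifts) each have a single nonzero entry, so they also contribute at most one overlap row and the same bound applies.

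The main obstacle, and the place I would spend the most care, is making the overlap-counting argument fully rigorous when $j_1$ is itself one of the systematic columns versus a triangle column, and when two of the $c_{j_t}$ are shifted copies of the \emph{same} triangle $T_l$ at different shifts — here one must check that the DTS property (differences distinct \emph{across} triangles, not just within one) is exactly what rules out a two-element intersection between $s+T_l$ and $s'+T_l$ for $s\ne s'$, since such an intersection would give $a_{l,j}-a_{l,k}=a_{l,j'}-a_{l,k'}$ with $(j,k)\ne(j',k')$, contradicting distinctness of differences within $T_l$ (which a DTS, being in particular a wDTS, guarantees). A clean way to package all of this is: show that the restriction $[\mathcal H]_{I;\{j_1\}\cup J'}$ has the property that column $j_1$ is all-nonzero on its $w=|I|$ rows while every other column has at most one nonzero entry in those rows; then any dependency would express an all-nonzero vector as a combination of vectors each having support of size $\le 1$ inside $I$, and matching supports forces at least $w$ such vectors with pairwise disjoint singleton supports inside $I$ — but we have at most $|J|-1\le w-1$ of them, the final contradiction. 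Finally, for (ii), I would note that the truncated analogue goes through verbatim with $\mathcal H$ replaced by $H_j^c$ and $w$ by $w_j$, invoking Theorem \ref{cd} exactly as in the proof of Theorem \ref{distance}(ii).
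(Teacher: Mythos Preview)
Your proposal is correct and follows essentially the same approach as the paper: both verify the hypothesis of Theorem \ref{distance} with $\tilde w = w$ by observing that the DTS property forces any two distinct columns of $\mathcal H$ to have supports intersecting in at most one row, so in $[\mathcal H]_{I;J}$ the first column is all-nonzero while each of the remaining $\le w-1$ columns has at most one nonzero entry, whence the first column cannot lie in their span. Your digression about $j_1$ possibly being a systematic column is unnecessary (the hypothesis $j_1\le k$ already forces it to be a triangle column), and the phrase ``must also vanish off $I$'' is extraneous since you are working in the submatrix with row set $I$; but the core argument is exactly the paper's.
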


\begin{proof}
As already mentioned in \cite{robinson1967class}, matrices $\mathcal{H}$ constructed from a DTS have the property that for every pair of columns, their supports intersect at most once. Since $[\mathcal{H}]_{I;J}$ as defined in Theorem \ref{distance} has the property that all entries in the first column are non-zero, all other columns have at most one non-zero entry. But this implies that the first column cannot be in the span of the other columns and thus, the requirements of Theorem \ref{distance} are fulfilled for $\tilde{w}=w$, which proves the corollary.  
\end{proof}

\begin{remark}\label{nk}
If $n-k>1$, it is not necessary to have a DTS to obtain that all columns of $\mathcal{H}$ intersect at most once since one only has to consider shifts of columns by multiples of $n-k$. Therefore, we still need to consider a set $\T = \{T_1,\dots, T_k\}$ such that all the differences $a_{i_1,j_1}-a_{i_1,s_1}$ and $a_{i_2,j_2}-a_{i_2,s_2}$ for $i_1\neq i_2$ are different, i.e. two differences coming from different triangles of $\T$ have always to be different, but $a_{i,j_1}-a_{i,s_1}$ and $a_{i,j_2}-a_{i,s_2}$, i.e. differences coming from the same triangle, only have to be different if $(n-k)\mid (a_{i,j_1}-a_{i,j_2})$.
\end{remark}

\begin{example}
Consider $n=3$, $k=1$ and $T_1=\{1,2,3\}$. It holds $2-1=3-2$ but since $3-2$ is not divisible by $n-k=2$, this does not matter and we still get that all columns of $\mathcal{H}$ intersect at most once. For example for $\mu=1$, we get $$\mathcal{H}=\left[\begin{matrix}1 &  1& 0 &0 & 0 & 0\\ 1 & 0 & 1 & 0 & 0 & 0\\ 1 & 0 & 0 & 1 & 1 & 0\\0 & 0 & 0 & 1 & 0 & 1\end{matrix}\right].$$
\end{example}

From Corollary \ref{s} we know that if we use a DTS to construct the parity-check matrix of the code, then the values of the nonzero entries are not important to achieve good distance properties. In the following, we present a construction that achieves also quite large distances if one takes the sets in a wDTS as support sets for the columns of the non-systematic part of $\bar{H}$. Moreover, in Section \ref{sec:fieldsize}, we show that this construction ensures that the Tanner graph associated to $H$ is free from cycles of arbitrary length not satisfying the FRC if the size of the underlying field is sufficiently large and the wDTS fulfills some additional properties.

\begin{definition}\label{Construction}
Let $k,n$ be positive integers with $n>k$ and $\T:=\{T_1, \dots, T_{k}\}$ be a $(k, w)$-wDTS with scope $m(\T)$. 
Set $\mu=\left\lceil\frac{m(\T)}{n-k}\right\rceil-1$ and let $\alpha$ be a primitive element for $\F_q$, so that every non-zero element of $\F_q$ can be written as power of $\alpha$. For any $1\leq i \leq (\mu+1)(n-k)$, $1\leq l\leq k$, define
$$
\bar{H}^\T_{i,l} := \begin{cases}\alpha^{il} & \text{ if } i \in T_l \\
0 & \text{ otherwise}
\end{cases}.$$
%The last $n-k$ columns of $\bar{H}^\T$ are given by $[I_{n-k},0_{n-k},\dots, 0_{n-k}]^\top$.
 Obtain the matrix $\mathcal{H}^\T$ by ``shifting" the columns of $\bar{H}^\T$ by multiples of $n-k$ and then a sliding matrix $H^\T$ of the form of equation \eqref{eq:kerH}. Finally, define $\C^\T := \ker(\mathcal{H}^\T)$ over $\F_q$.
%let $\C^\T$ be the rate $(n-1)/n$ convolutional code defined as $\C^\T := \ker(H^\T)$ over $\F_q$. 

%(Note that here $\mu = m(\T)-1$.)

\end{definition}

\begin{example}\label{ex:Construction}
Let $\F_q:=\{0,1,\alpha, \dots, \alpha^{q-2}\}$ 
%be the finite field with four elements. Let 
and $\T$ be a $(2,3)$-wDTS, such that $T_1:=\{1,2,6\}$ and $T_2:=\{1,2,4\}$. 
Then, with the notation above,
$$\bar{H}^\T= \begin{bmatrix}
\alpha & \alpha^{2} & 1 \\
\alpha^{2} & \alpha^{4} & 0 \\
0 & 0 & 0 \\
0 & \alpha^8 & 0 \\
0 & 0 & 0 \\
\alpha^6 & 0 & 0 
\end{bmatrix},$$
which leads to the following sliding matrix.

$$\mathcal{H}^\T= \left[\begin{array}{cccccccccccccccccc}
\alpha & \alpha^{2} & 1 & & & & & & & &  \\

\alpha^{2} & \alpha^{4} & 0 & \alpha & \alpha^2 & 1 & & & & & & & & & & & \\

0 & 0 & 0  & \alpha^{2} & \alpha^{4} & 0  & \alpha & \alpha^2 & 1 & & & & & & & & \\

0 & \alpha^8 & 0 & 0 & 0 & 0  & \alpha^{2} & \alpha^{4} & 0  & \alpha & \alpha^2 & 1 & & & & & \\

0 & 0 & 0 &  0 & \alpha^8 & 0&  0 & 0 & 0  & \alpha^{2} & \alpha^{4} & 0  & \alpha & \alpha^2 & 1 & & &  \\

\alpha^6 & 0 & 0 & 0 & 0 & 0 &  0 & \alpha^8 & 0 & 0 & 0 & 0  & \alpha^{2} & \alpha^{4} & 0  & \alpha & \alpha^2 & 1 \\
\end{array}\right].$$

The code constructed here is a $(3,2)_q$ convolutional code.
In this example, one has $d_0^c=2$, $d_1^c=d_2^c=d_3^c=d_4^c=3$ and $d_5=\dfree=4$.
\end{example}

% \begin{example}\label{ex:2}
% Let $\F_q:=\{0,1,\alpha, \dots, \alpha^{q-2}\}$ 
% %be the finite field with four elements. Let 
% and $\T$ be a $(2,3)$-wDTS, such that $T_1:=\{1,2,6\}$ and $T_2:=\{2,3,5\}$. 
% Then, with the notation above,
% $$\bar{H}^\T= \begin{bmatrix}
% \alpha & 0 & 1 \\
% \alpha^{2} & \alpha^{4} & 0 \\
% 0 & \alpha^6 & 0 \\
% 0 & 0 & 0 \\
% 0 & \alpha^{10} & 0 \\
% \alpha^6 & 0 & 0 
% \end{bmatrix},$$
% which leads to the sliding matrix in Figure \ref{fig:2}.
% \begin{figure*}[b!]
% $$\mathcal{H}^\T= \left[\begin{array}{cccccccccccccccccc}
% \alpha & 0 & 1 & & & & & & & &  \\

% \alpha^{2} & \alpha^{4} & 0 & \alpha & 0 & 1 & & & & & & & & & & & \\

% 0 & \alpha^6 & 0  & \alpha^{2} & \alpha^{4} & 0  & \alpha & 0 & 1 & & & & & & & & \\

% 0 & 0 & 0 & 0 & \alpha^6 & 0  & \alpha^{2} & \alpha^{4} & 0  & \alpha & 0 & 1 & & & & & \\

% 0 & \alpha^{10} & 0 &  0 & 0 & 0&  0 & \alpha^6 & 0  & \alpha^{2} & \alpha^{4} & 0  & \alpha & 0 & 1 & & &  \\

% \alpha^6 & 0 & 0 & 0 & \alpha^{10} & 0 &  0 & 0 & 0 & 0 & \alpha^6 & 0  & \alpha^{2} & \alpha^{4} & 0  & \alpha & 0 & 1 \\
% \end{array}\right]
% $$
% \caption{Sliding parity-check matrix for the code in Example \ref{ex:2}.\label{fig:2}}
% \end{figure*}
% \end{example}

%\textcolor{red}{I am not sure if we should try to do more about this or just leave it as a remark. It would be nice to illustrate it at least with an example.}

The next theorem is a generalization of \cite[Theorem 12]{alfarano2020construction} to any rate. 

\begin{theorem}\label{thm:2x2minors}
Let $w, n, k$ be positive integers with $n>k$ and $\T$ be a $(k,w)$-wDTS with scope $m(\T)$ and  $q>(\mu+1)(n-k)(k-1)+1=\lceil\frac{m(\T)}{n-k}\rceil(n-k)(k-1)+1$. Let $\C^\T$ be the $(n,k)_q$ convolutional code defined from $\T$, as defined in Definition \ref{Construction} and consider $\mathcal{H}^\T$ as in \eqref{eq:slidingportion}.
Then, all the $2\times 2$ minors in $\mathcal{H}^\T$ that are non-trivially zero are non-zero.
\end{theorem}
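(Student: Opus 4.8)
The plan is to turn the statement into the concrete assertion that every $2\times 2$ submatrix of $\mathcal{H}^\T$ whose four entries are all nonzero has nonzero determinant, and then to exploit the fact that every nonzero entry of $\mathcal{H}^\T$ is a power of the fixed primitive element $\alpha$, so that such a determinant vanishes exactly when an explicit integer congruence modulo $q-1$ holds --- a congruence the hypothesis on $q$ rules out. The reduction itself is quick: for a $2\times 2$ submatrix $\left[\begin{smallmatrix}a&b\\c&d\end{smallmatrix}\right]$ the determinant $ad-bc$ is forced to vanish by the zero pattern only when the submatrix has an all-zero row or an all-zero column; in every other case either exactly one of $a,b,c,d$ is zero or the two nonzero entries lie on the anti-diagonal, and then $ad-bc=\pm(\text{product of nonzero elements of }\F_q)\neq 0$ automatically. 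So it is enough to handle submatrices with all four entries nonzero.

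Next I would locate where an all-nonzero $2\times 2$ submatrix can sit. Every systematic column of $\mathcal{H}^\T$ (a shifted copy of $[I_{n-k}\ 0\ \cdots\ 0]^\top$) has a single nonzero entry, so it cannot contribute two nonzero entries in two prescribed rows; hence both chosen columns are shifted copies of one of the first $k$ columns of $\bar{H}^\T$. I would index such a column by a pair $(l,t)$ with $l\in\{1,\dots,k\}$ and $t\geq 0$: its nonzero entries occupy the global rows $t(n-k)+i$ for $i\in T_l$ with $1\le i\le(\mu+1)(n-k)$, and the entry there equals $\alpha^{il}$. Picking two distinct columns $(l_1,t_1)\ne(l_2,t_2)$ and two rows $R_1<R_2$, the four entries become $\alpha^{i_1l_1},\alpha^{j_1l_1},\alpha^{i_2l_2},\alpha^{j_2l_2}$ with $i_s:=R_1-t_s(n-k)\in T_{l_s}$ and $j_s:=R_2-t_s(n-k)\in T_{l_s}$; the crucial observation is that $j_1-i_1=j_2-i_2=R_2-R_1=:d$, and that $1\le d\le(\mu+1)(n-k)-1$ since $i_s,j_s$ are row indices of $\bar{H}^\T$.

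Then I would split into cases. If $l_1=l_2$, then $t_1\ne t_2$; but $(i_1,j_1)$ and $(i_2,j_2)$ are two pairs of elements of the same set $T_{l_1}$ realizing the same positive difference $d$, so the defining property of a wDTS forces $(i_1,j_1)=(i_2,j_2)$, hence $t_1=t_2$, a contradiction, so this configuration never occurs. If $l_1\ne l_2$, say $l_1>l_2$ so that $1\le l_1-l_2\le k-1$, the determinant $\alpha^{i_1l_1+j_2l_2}-\alpha^{j_1l_1+i_2l_2}$ vanishes iff $i_1l_1+j_2l_2\equiv j_1l_1+i_2l_2\pmod{q-1}$, i.e.\ (substituting $j_1-i_1=j_2-i_2=d$) iff $d(l_1-l_2)\equiv 0\pmod{q-1}$; but $1\le d(l_1-l_2)\le((\mu+1)(n-k)-1)(k-1)<(\mu+1)(n-k)(k-1)<q-1$ by hypothesis, so this congruence cannot hold and the determinant is nonzero.

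I expect the only real obstacle to be the bookkeeping in the second step --- correctly enumerating how an all-nonzero $2\times 2$ submatrix can arise inside the block-sliding matrix $\mathcal{H}^\T$, and above all observing that two nonzero entries of a single column in rows $R_1<R_2$ must come from the $T_l$-pair whose difference is exactly $R_2-R_1$. That observation is what lets the wDTS axiom kill the case $l_1=l_2$, and it is also what bounds the relevant exponent $d(l_1-l_2)$ by $(\mu+1)(n-k)(k-1)$ independently of $q$, making the stated field-size bound precisely what is required.
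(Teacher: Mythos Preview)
Your proof is correct and follows essentially the same approach as the paper: reduce to all-nonzero $2\times 2$ submatrices, use the wDTS property to rule out the case where both columns are shifts of the same column of $\bar{H}^\T$, and then in the remaining case write the determinant as $\alpha^{i_1l_1+i_2l_2}(\alpha^{d l_2}-\alpha^{d l_1})$ and bound $|d(l_1-l_2)|$ by $(\mu+1)(n-k)(k-1)<q-1$. Your write-up is in fact more explicit than the paper's in handling the systematic columns and the reduction step, but the key computation and the use of the field-size hypothesis are identical.
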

\begin{proof}
The only $2\times 2$ minors to check are the ones of the form $\begin{vmatrix}a_1 & a_2\\ a_3 & a_4
\end{vmatrix}$. By definition of wDTS, the support of any column of $\mathcal{H}^\T$ intersects the support of its shift at most once. This ensures that the columns of all these minors are the shift of two different columns of $\bar{H}^\T$. Moreover, all the elements in the minor are powers of $\alpha$. In particular, let $1\leq i,r \leq (\mu+1)(n-k)$, $1\leq j,\ell \leq k$ (note that $j<\ell$ or $\ell<j$ according to which columns from $\bar{H}^\T$ are involved in the shifts). Hence we have that:
\begin{align*}
& \begin{vmatrix}a_1 & a_2\\ a_3 & a_4
 \end{vmatrix}  =
 \begin{vmatrix}\alpha^{ij} & \alpha^{m\ell}\\ \alpha^{(i+r)j} & \alpha^{(m+r)\ell}
\end{vmatrix}  = \\
%a_2a_3 - a_1a_4 
&\alpha^{ij}\alpha^{(m+r)\ell} - \alpha^{m\ell}\alpha^{(i+r)j} =
\alpha^{ij + m\ell}(\alpha^{r\ell}-\alpha^{rj})
\end{align*}
which is $0$ if and only if $r\ell = rj \mod (q-1)$. Since it holds that $0\leq j < \ell \leq k$ or $0\leq \ell < j \leq k$  and $1\leq r \leq (\mu+1)(n-k)$, this cannot happen.  
\end{proof}

The following theorem is a generalization of \cite[Theorem 13]{alfarano2020construction} for any rate. However, in the proof in \cite{alfarano2020construction} there is a computation mistake, hence we put the correct version below. 

\begin{theorem}\label{thm:3x3minors}
Let $w, n, k$ be positive integers with $n>k$ and $\T$ be a $(k,w)$-wDTS with scope $m(\T)$, $w\geq 3$. Let $\C^\T$ be the $(n,k)_q$ convolutional code defined from $\T$, as in Definition \ref{Construction} with $\mathcal{H}^\T$ as defined in \eqref{eq:slidingportion} and assume that  $(\mu+1)(n-k)>2$. Assume also that $q=p^N$, where $p>2$ and $$N>(\mu+1) (n-k)(k-1)= \Big\lceil \frac{m(\T)}{n-k}\Big\rceil(n-k)(k-1).$$ Then, all the $3\times 3$ minors in $\mathcal{H}^\T$ that are non-trivially zero are non-zero.
\end{theorem}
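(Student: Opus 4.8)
\emph{Proof proposal.} The plan is to reduce a non-trivially-zero $3\times3$ minor either to a $2\times2$ minor handled by Theorem~\ref{thm:2x2minors} or to a short sum of powers of $\alpha$ that is seen to be nonzero because $1,\alpha,\dots,\alpha^{N-1}$ are linearly independent over $\F_p$. Fix a $3\times3$ submatrix $B$ of $\mathcal{H}^\T$ whose support admits a perfect matching; permuting rows and columns we may assume its three diagonal entries are nonzero. If some row or column of $B$ has a unique nonzero entry -- in particular if $B$ meets one of the weight-one systematic columns -- then Laplace expansion along that line writes $\det B=\pm(\text{entry})\cdot(2\times2\text{ minor of }\mathcal{H}^\T)$ with the $2\times2$ minor again non-trivially zero; the hypothesis on $N$ gives $q>(\mu+1)(n-k)(k-1)+1$, so Theorem~\ref{thm:2x2minors} settles this case. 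Hence I may assume every row and every column of $B$ has at least two nonzero entries.

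Next comes the determinant computation. Let the three columns of $B$ be shifts (by $t_1,t_2,t_3$ multiples of $n-k$) of the columns of $\bar{H}^\T$ with triangle-labels $j_1,j_2,j_3\in\{1,\dots,k\}$, and let $r_1<r_2<r_3\in\{1,\dots,(\mu+1)(n-k)\}$ be its rows. Since every nonzero entry of $B$ is a power of $\alpha$,
$$\det B=\alpha^{-(n-k)\sum_c t_cj_c}\sum_{\sigma}\mathrm{sgn}(\sigma)\,\alpha^{v_\sigma},\qquad v_\sigma:=\sum_{c=1}^3 r_{\sigma(c)}j_c,$$
summed over the permutations $\sigma$ with all three $B_{\sigma(c),c}\neq0$. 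By the rearrangement inequality the nonnegative integers $v_\sigma$ lie in an interval of length $(r_3-r_1)(j_{\max}-j_{\min})\le\bigl((\mu+1)(n-k)-1\bigr)(k-1)<N$. As $\alpha$ is primitive, $\F_p(\alpha)=\F_q$ and $1,\dots,\alpha^{N-1}$ are $\F_p$-independent, so dividing by $\alpha^{\min_\sigma v_\sigma}$ (and by the unit in front) one gets that $\det B=0$ if and only if, for every value $v$, the signed count $d_v:=\sum_{\sigma:\,v_\sigma=v}\mathrm{sgn}(\sigma)$ vanishes in $\F_p$.

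Finally I would check that the $d_v$ do not all vanish. Their sum is $\det(B_{0/1})$, the determinant of the $0/1$ support matrix of $B$, an integer of absolute value at most $2$; since $p\ge3$, whenever $B_{0/1}$ is nonsingular over $\mathbb Z$ some $d_v\neq0$. This disposes of the genuine $6$-cycle \eqref{eq:cycles} (support determinant $\pm2$; here $\det B=\alpha^{v_1}+\alpha^{v_2}$ with $v_1-v_2=(r_1-r_3)j_1+(r_2-r_1)j_2+(r_3-r_2)j_3$ -- a quantity miscomputed in \cite{alfarano2020construction} -- whose coefficients sum to zero, forcing $|v_1-v_2|\le(r_3-r_1)(k-1)<N$, and where $p>2$ is what is left to use when $v_1=v_2$), and of every $7$-nonzero pattern (support determinant $\pm1$). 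When $B_{0/1}$ is singular, a short inspection shows an irreducible $0/1$ matrix of order three is singular only if it has two equal rows or two equal columns, which must be all-ones lines; so two columns of $B$ share all three rows, and by the weak-difference-triangle property (two shifts of one triangle share at most one nonzero row, as in the proof of Theorem~\ref{thm:2x2minors}) the labels $j_1,j_2,j_3$ are pairwise distinct. Then no three admissible $\sigma$ can share a value of $v_\sigma$: such a triple would differ pairwise by $3$-cycles, hence form a coset of the alternating group, forcing either the $j_c$ equal (impossible) or all three transpositions admissible (impossible, since in the singular configuration exactly one transposition sends the defective row to the missing column). Hence the maximal $v_\sigma$ over admissible $\sigma$ has multiplicity one or two; in the first case $d_{v_{\max}}=\pm1$, in the second the two permutations differ by a $3$-cycle, so $d_{v_{\max}}=\pm2$, both nonzero in $\F_p$. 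Therefore $\det B\neq0$.

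I expect this last step -- listing the singular $0/1$ supports and controlling, via $p>2$, the signed multiplicity of the extremal exponent among admissible permutations -- to be the main obstacle; everything else is the $6$-cycle determinant identity and an appeal to Theorem~\ref{thm:2x2minors}. The hypotheses fit: $p>2$ makes $\pm2\neq0$ (and $-1=\alpha^{(q-1)/2}$ a non-trivial power), the lower bound on $N$ keeps the exponent window shorter than $N$, and $(\mu+1)(n-k)>2$ merely guarantees that $\mathcal{H}^\T$ has $3\times3$ submatrices.
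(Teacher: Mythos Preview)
Your approach is correct and genuinely different from the paper's. The paper proceeds by an exhaustive case analysis on the zero pattern of the $3\times 3$ submatrix (nine, eight, seven, and six nonzero entries, treated separately as Cases~I--IV), computing in each case the determinant as an explicit short sum of powers of $\alpha$ and bounding the spread of exponents. You instead give a uniform argument: reduce to the $2\times 2$ theorem whenever a line has a unique nonzero entry, then write $\det B$ (up to a unit) as $\sum_{\sigma}\mathrm{sgn}(\sigma)\alpha^{v_\sigma}$, bound $\max_\sigma v_\sigma-\min_\sigma v_\sigma\le (r_3-r_1)(j_{\max}-j_{\min})$ by the rearrangement inequality, and invoke the $\F_p$-linear independence of $1,\alpha,\dots,\alpha^{N-1}$. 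The observation $\sum_v d_v=\det(B_{0/1})\in\{0,\pm1,\pm2\}$ then disposes of all nonsingular support patterns at once (this is where $p>2$ enters), and you only need extra work for the $8$- and $9$-nonzero configurations. This is cleaner and more conceptual than the paper's treatment; it also makes explicit the reduction to $2\times2$ minors for patterns with $\leq 5$ nonzeros, which the paper leaves implicit.

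One point deserves tightening. Your justification that no three admissible permutations share a value $v_\sigma$ is phrased as ``forcing either the $j_c$ equal \dots\ or all three transpositions admissible (impossible, since in the singular configuration exactly one transposition sends the defective row to the missing column)''. The second alternative only rules out the transposition coset in the $8$-nonzero case; in the $9$-nonzero case all six permutations, and in particular all three transpositions, \emph{are} admissible. The clean statement (which you essentially have) is: if $\sigma_1,\sigma_2$ satisfy $v_{\sigma_1}=v_{\sigma_2}$ then $\sigma_1\sigma_2^{-1}\in A_3$, and if an entire coset $\{\tau,\tau(123),\tau(132)\}$ shares a common value then the linear system $(r_1-r_2)(j_{\tau^{-1}(1)}-j_{\tau^{-1}(2)})=0$, etc., forces $j_1=j_2=j_3$ (equivalently, the quadratic form $\sum_{a<b}(j_a-j_b)^2$ vanishes). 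This holds for \emph{either} coset and needs no admissibility hypothesis, so both the $8$- and $9$-entry cases are covered. With that adjustment your argument is complete.
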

\begin{proof}
We need to distinguish different cases. \newline
\underline{\textbf{Case I}}. The $3\times 3$ minors are of the form $$\begin{vmatrix}a_1 & a_2 & a_3\\ a_4 & a_5 & a_6 \\ a_7 & a_8 & a_9
\end{vmatrix},$$ with $a_i \ne 0$ for any $i$. As we observed in Theorem \ref{thm:2x2minors}, in this case all the columns are shifts of three different columns from $\bar{H}^\T$, since each column can intersect any of its shifts at most once. Observe that we can write a minor of this form as 
\begin{gather*}
    \begin{vmatrix}a_1 & a_2 & a_3\\ a_4 & a_5 & a_6 \\ a_7 & a_8 & a_9
\end{vmatrix} = \begin{vmatrix}
\alpha^{ij} & \alpha^{lu} & \alpha^{tm} \\
\alpha^{(i+r)j} & \alpha^{(l+r)u} & \alpha^{(t+r)m}\\
\alpha^{(i+r+s)j} & \alpha^{(l+r+s)u} & \alpha^{(t+r+s)m}\\
\end{vmatrix},
\end{gather*}
where $1\leq i,l,t \leq (\mu+1)(n-k)$, $r,s \in\mathbb{Z}$ are possibly negative, with $r\ne s$, and $1\leq j,u,m\leq k$ representing the index of the column from which the selected element comes from (or if the selected elements belongs to the shift of some column, $j,u,m$ are still the indexes of the original column). Due to symmetry in this case we can assume $r,s\in\mathbb N$ and $1\leq i,l,t \leq (\mu+1)(n-k)-3$. Moreover, $-(\mu+1)(n-k) +1\leq i+r, l+r, t+r\leq (\mu+1)(n-k) - 1$ and $-(\mu+1)(n-k)\leq i+r+s, l+r+s, t+r+s \leq (\mu+1)(n-k)$.
This determinant is $0$ if and only if 
\begin{gather}
\alpha^{ru+rm+sm} + \alpha^{rm+rj+sj}+ \alpha^{rj+ru+sk}=\nonumber\\
\alpha^{ru+rj+sj}+ \alpha^{rj+rm+sm}+ \alpha^{ru+rm+sk}\label{eq:determinant3x3}.    
\end{gather}
Without loss of generality we can assume that $j<u<m$ and it turns out that the maximum exponent in equation \eqref{eq:determinant3x3} is $ru+rm+sm$ while the minimum is $ru + rj + sj$. Let $M:=ru+rm+sm - (ru + rj + sj)$. It is not difficult to see that the maximum value for $M$ is $((\mu+1)(n-k)-1)(k-1)$ hence this determinant can not be zero because $\alpha$ is a primitive element for $\F_q$ and, by assumption, $q=p^N$, where $N>M$.

\underline{\textbf{Case II}}. The $3\times 3$ minors are of the form $$\begin{vmatrix}a_1 & a_2 & 0\\  a_3 & a_4 & a_5\\ a_6 & 0 & a_7
\end{vmatrix}.$$
As in the first case, we can assume that the minor is given by 
\begin{gather*}
    \begin{vmatrix}
\alpha^{ij} & \alpha^{lu} & 0 \\
\alpha^{(i+r)j} & \alpha^{(l+r)u} & \alpha^{(t+r)m}\\
\alpha^{(i+r+s)j} & 0 & \alpha^{(t+r+s)m}\\
\end{vmatrix},
\end{gather*}
with the same bounds on the variables as before.
%where $1\leq i,l,t \leq \mu(n-k)-3$, $r,s \in\mathbb{Z}$ are possibly negative, with $r\ne s$, $1\leq j,u,m\leq k$ represent the index of the column from which the selected element comes from, $-\mu(n-k) +1\leq i+r, l+r, t+r\leq \mu(n-k) - 1$ and $-\mu(n-k)\leq i+r+s, l+r+s, t+r+s \leq \mu(n-k)$. 
But, in this case $j\ne u,m$ but $u$ can be equal to $m$. Indeed, the first column intersects the other two in two places, which means that they are not all shifts of the same column. However, the second and third ones can belong to the same column.
This determinant is $0$ when $\alpha^{ru+sm}+\alpha^{rj+sj}- \alpha^{rm+sm}=0$. In this case, according to the different possibilities for $j,u,m$ and $r,s$ we check the maximum and the minimum exponent. We present here only the worst case for the field size, which is obtained when $j<u<m$, $r<0$. We see that the minimum exponent is $rj+sj$ and the maximum is $rj+sm$. We consider $M:=rj+sm-rj-sj$ and we check what is the maximum value that $M$ can reach. It is not difficult to see that this is $(\mu+1)(n-k)(k-1)$. When $p=p^N$, with $N>M$, the considered determinant is never $0$.

\underline{\textbf{Case III}}. The $3\times 3$ minors are of the form $$\begin{vmatrix}a_1 & a_2 & a_3\\ a_4 & a_5 & a_6 \\ a_7 & a_8 & 0
\end{vmatrix},$$ with $a_i \ne 0$ for any $i$. We can assume that, the minor is given by 
\begin{gather*}
    \begin{vmatrix}
\alpha^{ij} & \alpha^{lu} & \alpha^{tm} \\
\alpha^{(i+r)j} & \alpha^{(l+r)u} & \alpha^{(t+r)m}\\
\alpha^{(i+r+s)j} &\alpha^{(l+r+s)u} & 0 \\
\end{vmatrix},
\end{gather*}
with the same bounds on the variables as in previous cases.
%where $1\leq i,l,t \leq \mu(n-k)-3$, $r,s \in\mathbb{Z}$ are possibly negative, with $r\ne s$, $1\leq j<u<m\leq k$, $-\mu(n-k) +1\leq i+r, l+r, t+r\leq \mu(n-k) - 1$ and $-\mu(n-k)\leq i+r+s, l+r+s, t+r+s \leq \mu(n-k)$.
However, this time $1\leq j<u<m\leq k$. After some straightforward computations, we get that this determinant is 0 if and only if \begin{gather}\label{eq:determinant3x3minusone}
\alpha^{rm+rj+sj}+ \alpha^{rj+ru+su}=\alpha^{ru+rj+sj}+ \alpha^{ru+rm+su}.
\end{gather} 
In the worst case, consider $M:=ru+rj+su - (rm+rj+sj) = r(u-m)+s(u-j)$ with $r<0$. We immediately see that the maximum value that $M$ can reach is  $(\mu+1)(n-k)(k-2)+1$, hence this determinant can not be zero because $\alpha$ is a primitive element for $\F_q$ and, by assumption, $q=p^N$, where $N>M$.

\underline{\textbf{Case IV}}. The $3\times 3$ minors are of the form $$\begin{vmatrix}a_1 & a_2 & 0\\ 0 & a_3 & a_4 \\ a_6 & 0 & a_5
\end{vmatrix}.$$ 
In this case, we can have that the three considered columns come from different shifts of the same one, hence we allow that some (or all) among $j,u,m$ are equal. 
Arguing as before, we notice that these minors are given by
\begin{gather*}
    \begin{vmatrix}
\alpha^{ij} & \alpha^{lu} & 0 \\
0 & \alpha^{(l+r)u} & \alpha^{(t+r)m}\\
\alpha^{(i+r+s)j} & 0 & \alpha^{(t+r+s)m}\\
\end{vmatrix} =\\ \alpha^{ij+lu+tm+rm}(\alpha^{ru+sm}+\alpha^{rj+sj}).
\end{gather*}
%where $1\leq i,l,t \leq \mu(n-k)-3$, $r,s\in\mathbb Z$ (possibly negative), with $r\ne s$, $1\leq j,u,m\leq k$ are as in the previous case, $-\mu(n-k) +1\leq i+r, l+r, t+r\leq \mu(n-k) - 1$ and $-\mu(n-k)\leq i+r+s, l+r+s, t+r+s \leq \mu(n-k)$.
This determinant is $0$ whenever $r(u-j) + s(m-j) - (q-1)/2=0 \mod (q-1)$. 
Analyzing all the possibilities we can have according to $r,s$ being negative or positive and $j,u,m$ being equal or different, after some computations, we obtain that, whenever $q>2(k-1)((\mu+1)(n-k)-1)+1$, the considered determinant is never $0$. 
% ($N>\max\{rk+sm, rj+sj\}$ This determinants are nonzero  $N>\max\{(\delta-1)(n-2)+(n-3), \delta(n-3)\} = (\delta-1)(n-2)+(n-3)$ (because we can assume $\delta>1$ since $w\geq 3$). 
And this is the case for our field size assumption.   
\end{proof}

Observe that Case IV of Theorem \ref{thm:3x3minors} corresponds to the lower bound for the field size sufficient to avoid the presence of $6$-cycles not satisfying the FRC. Hence, we have the following result.

\begin{corollary}
Let $\mathcal{C}^\T$ be an $(n,k)$ convolutional code constructed from a $(k,w)$ wDTS $\T$ and satisfying the conditions of Theorem \ref{thm:2x2minors} and Theorem \ref{thm:3x3minors}. Then, $d_{free}(\mathcal{C}^\T)\geq 3$ and the code is free from $4$ and $6$-cycles not satisfying the FRC.
\end{corollary}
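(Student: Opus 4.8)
The plan is to obtain the three assertions by feeding the two preceding theorems into the standard dictionary between short cycles of the Tanner graph and small submatrices of the parity-check matrix recalled in Section~\ref{subsec:NBLDPC}. First I would record that, as noted around \eqref{eq:matrixH}, $H^\T$ contains a $2\ell$-cycle violating the FRC if and only if $\mathcal{H}^\T$ does, so it suffices to argue inside the finite window $\mathcal{H}^\T$, to which Theorems~\ref{thm:2x2minors} and~\ref{thm:3x3minors} apply directly.

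For the absence of $4$-cycles violating the FRC, I would invoke \eqref{eq:cycles} with $\ell=2$: such a cycle is a $2\times2$ submatrix of $\mathcal{H}^\T$ whose four entries lie in $\F_q^\ast$ and whose determinant is zero. Since $a_1a_4-a_2a_3$ is not identically zero, this is a non-trivially zero $2\times2$ minor of $\mathcal{H}^\T$, which is excluded by Theorem~\ref{thm:2x2minors}. For $\dfree(\mathcal{C}^\T)\geq3$ I would rule out nonzero codewords of weight $1$ and $2$: a weight-$1$ codeword is impossible because every column of $H^\T$ is nonzero, while a weight-$2$ codeword would force two distinct columns of $H^\T$ to be proportional, hence of equal support; both columns are then non-systematic, so this common support has size $w\geq3$, and restricting to any two of its rows yields an all-nonzero $2\times2$ submatrix of $H^\T$ (hence, by the equivalence above, of $\mathcal{H}^\T$) with vanishing determinant, again contradicting Theorem~\ref{thm:2x2minors}. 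This is exactly what Theorem~\ref{distance} gives when applied with $\tilde w=2$, whose hypothesis in that case amounts to saying that no two columns of $\mathcal{H}^\T$ are proportional.

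For the absence of $6$-cycles violating the FRC, I would use \eqref{eq:cycles} with $\ell=3$: such a cycle is a $3\times3$ submatrix of $\mathcal{H}^\T$ of the shape $\left[\begin{smallmatrix}a_1&a_2&0\\0&a_3&a_4\\a_6&0&a_5\end{smallmatrix}\right]$ with all entries in $\F_q^\ast$ and vanishing determinant $a_1a_3a_5+a_2a_4a_6$. This is precisely the zero pattern analysed in Case~IV of Theorem~\ref{thm:3x3minors}, and $a_1a_3a_5+a_2a_4a_6$ is not identically zero, so such a submatrix would be a non-trivially zero $3\times3$ minor of $\mathcal{H}^\T$; by Theorem~\ref{thm:3x3minors} it does not occur.

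The main obstacle is not conceptual but bookkeeping: one must be careful that each short cycle and each low-weight codeword genuinely descends to a submatrix of $\mathcal{H}^\T$ of the stated shape (the ``$H^\T$ versus $\mathcal{H}^\T$'' reduction, together with the elementary observations about supports of systematic versus non-systematic columns), and that the zero patterns produced by \eqref{eq:cycles} for $\ell=2$ and $\ell=3$ are exactly the non-trivially zero $2\times2$ and $3\times3$ minors controlled by Theorems~\ref{thm:2x2minors} and~\ref{thm:3x3minors} (Case~IV for $\ell=3$). Beyond those two theorems no new idea is needed.
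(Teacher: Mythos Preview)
Your proposal is correct and follows essentially the same route as the paper, which in fact does not give a separate proof of this corollary at all: the paper simply records, just before the statement, that Case~IV of Theorem~\ref{thm:3x3minors} is exactly the $6$-cycle case, and treats the rest as immediate from Theorem~\ref{thm:2x2minors}. Your write-up just makes the implicit steps explicit; the only minor quibble is that your appeal to Theorem~\ref{distance} with $\tilde w=2$ is not quite how that theorem is phrased (there $|I|$ equals the support size $w$ of the first column, so the hypothesis is vacuous unless $\tilde w=w$), but since you also give the direct argument via proportional columns and Theorem~\ref{thm:2x2minors}, this does not affect the validity of your proof.
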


\begin{remark}
If $\mathcal{C}^\T$ is an $(n,k)$ convolutional code constructed from a $(k,w)$ wDTS $\T$ and satisfying the conditions of Theorem \ref{thm:2x2minors} and Theorem \ref{thm:3x3minors}, such that $H_{\mu}$ has no zero row and $n-k\leq \min\{3,k\}$, then, it follows from Proposition \ref{deg} that $\delta=\mu(n-k)$.
\end{remark}

\begin{example}
Consider the $(3,2)_q$ code constructed in Example \ref{ex:Construction}. Note that $\mu = 5$, hence, for $q>11$ we can avoid all the $6$-cycles not satisfying the FRC (Case IV of Theorem \ref{thm:3x3minors}).
\end{example}

%-----------------------------FIELD SIZE---------------------------------------------------------------

\section{Excluding 2$\ell$-cycles not satisfying the FRC}\label{sec:fieldsize}

In this section, we give some conditions that ensure that the Tanner graph associated to the sliding parity-check matrix of a convolutional code constructed via a difference triangle set is free of $2\ell$-cycles not satisfying the FRC.

First of all we recall from Subsection \ref{subsec:NBLDPC} that a $2\ell$-cycle can be represented by an $\ell \times \ell$ submatrix of $\mathcal{H}$ that up to column and row permutations is of the form 

\begin{equation}\label{eq:cycleA}
A=\begin{bmatrix}
a_1 & a_2 & 0 & \cdots &\cdots & 0 \\
0 & a_3 & a_4 & \cdots  &\cdots & \vdots\\
\vdots & & \ddots & & & \vdots \\
\vdots & & & \ddots & & \vdots\\
0 & & & & a_{2\ell-3} & a_{2\ell-2} \\
a_{2\ell} & 0 & \cdots & \cdots &0 & a_{2\ell-1}
\end{bmatrix},
\end{equation}
where $a_i\in\F_q^\ast$. 

\begin{remark}\label{rem:cyclelength}
Observe that 
$$\left[\begin{matrix}A_0 & & \\ \vdots & \ddots & \\A_{\mu} & \cdots & A_0\end{matrix}\right]\in\mathbb F^{(\mu+1)(n-k)\times(\mu+1)k},$$ 
hence it is clear that the Tanner graph associated to $H$ can only contain $2\ell$-cycles for $$\ell\leq\min\{(\mu+1)(n-k), (\mu+1)k\}.$$ 
\end{remark}

At first, we will investigate conditions on the wDTS used to construct the convolutional code that ensure that the associated Tanner graph contains no cycles at all independently of the nonzero values of the sliding parity-check matrix and hence also independently of the underlying finite field.

\begin{proposition}\label{dif}
If $\mathcal{C}$ is an $(n,k)$ convolutional code whose parity-check matrix has support $\T$ where $\T$ is a $(k,w)$-wDTS with the property that none of the differences $a_{i,j}-a_{i,m}$ for $1\leq i\leq k$ and $1\leq m<j \leq w$ is divisible by $n-k$, then each pair of columns that is next to each other in $A$ as in \eqref{eq:cycleA} consists of shifts of different columns of $\bar{H}$. In particular, at most $\lfloor\frac{\ell}{2}\rfloor$ columns of $A$ can be shifts of the same column of $\bar{H}$.
\end{proposition}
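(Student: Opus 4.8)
The plan is to analyze the structure of the $\ell \times \ell$ submatrix $A$ in \eqref{eq:cycleA} by tracking which column of $\bar H$ each column of $A$ descends from, together with the vertical shift applied. Recall from Definition \ref{Construction} (and Definition \ref{def:construction}) that $\mathcal H$ is built by placing shifts of the columns of $\bar H$ by multiples of $n-k$. Thus every column of $\mathcal H$ — hence every column of $A$ — is of the form ``column $l$ of $\bar H$, shifted down by $s(n-k)$ rows'' for some index $l \in \{1,\dots,k\}$ (we may ignore the systematic columns, since their support has size one and cannot participate in a cycle, every $a_i$ in $A$ being nonzero). The support of such a shifted column is $T_l + s(n-k)$.

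First I would set up the cycle condition. Consecutive columns of $A$, say the columns in positions $t$ and $t+1$ (indices mod $\ell$), share the row where entries $a_{2t}$ and $a_{2t+1}$ both sit; that is, the corresponding shifted columns of $\bar H$ have intersecting supports in at least one common row. Suppose, for contradiction, that these two columns are shifts of the \emph{same} column $T_i$ of $\bar H$, say by $s(n-k)$ and $s'(n-k)$ with $s \neq s'$ (they must be distinct columns of $\mathcal H$). Then $(T_i + s(n-k)) \cap (T_i + s'(n-k)) \neq \emptyset$ forces $a_{i,j} + s(n-k) = a_{i,m} + s'(n-k)$ for some $j,m$, i.e. $a_{i,j} - a_{i,m} = (s'-s)(n-k)$, which is a nonzero multiple of $n-k$ — contradicting the hypothesis on $\T$. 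Here one uses that $j \ne m$ (else $s = s'$), so the difference is a genuine nonzero entry of the difference triangle. This proves the first claim: each adjacent pair of columns of $A$ consists of shifts of different columns of $\bar H$.

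For the second claim, I would argue combinatorially on the cyclic sequence $l_1, l_2, \dots, l_\ell$ of $\bar H$-column-indices underlying the columns of $A$, read cyclically. The first claim says no two cyclically consecutive entries are equal. If some value $i$ occurs among $l_1,\dots,l_\ell$, then all of its occurrences are separated (cyclically) by at least one other index, so the number of occurrences of $i$ is at most $\lfloor \ell/2 \rfloor$ — this is the standard fact that an independent set in a cycle $C_\ell$ has size at most $\lfloor \ell/2 \rfloor$. This gives the bound.

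The only subtle point — and the step I would be most careful about — is justifying that in the submatrix $A$ the pairs $(a_{2t}, a_{2t+1})$ really do correspond to a \emph{common row} of $\mathcal H$ shared by columns $t$ and $t+1$, i.e. that the zero/nonzero pattern of \eqref{eq:cycleA} genuinely encodes adjacency of columns via shared support positions, and that ``different columns of $\mathcal H$'' with the same underlying $\bar H$-index must have $s \ne s'$. Both follow from the description of cycles recalled in Subsection \ref{subsec:NBLDPC} and from the construction, but they should be stated explicitly so that the divisibility hypothesis is applied to an honest difference $a_{i,j} - a_{i,m}$ with $1 \le m < j \le w$. The remaining arithmetic is immediate.
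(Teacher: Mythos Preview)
Your proof is correct and follows essentially the same approach as the paper: the divisibility hypothesis ensures that a column of $\bar H$ cannot intersect any of its own nontrivial shifts, so adjacent columns of $A$ (which share a row) must descend from different columns of $\bar H$. The paper's proof is terser and does not spell out the $\lfloor \ell/2\rfloor$ bound, which you correctly derive via the independent-set-in-$C_\ell$ argument.
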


\begin{proof}
The fact that none of the differences in the set is divisible by $n-k$ implies that the support of any column of $\bar{H}$ does not intersect the support of any of its shifts (by multiples of $n-k$). Since the supports of neighbouring columns of $A$ intersect, they have to be shifts of different columns of $\bar{H}$.   
\end{proof}

\begin{corollary}
If $\mathcal{C}$ is an $(n,k)$ convolutional code whose parity-check matrix has support $\T$ where $\T$ is a $(k,w)$-wDTS with the property that $T_1=\cdots=T_k$ and none of the differences $a_{1,j}-a_{1,m}$ for $1\leq m<j \leq w$ is divisible by $n-k$, then the Tanner graph associated to the parity-check matrix $H$ of $\mathcal{C}$ is free from cycles of any size (over every base field) not satisfying the FRC..
\end{corollary}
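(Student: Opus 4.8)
The plan is to argue by contradiction. Suppose the Tanner graph of $H$ contains a $2\ell$-cycle not satisfying the FRC; since, as noted after \eqref{eq:slidingportion}, $H$ has such a cycle if and only if $\mathcal{H}$ does, we obtain an $\ell\times\ell$ submatrix $A$ of $\mathcal{H}$ of the shape \eqref{eq:cycleA} with all $a_i\ne 0$ and $\det A=0$. Let $c_1,\dots,c_\ell$ and $r_1,\dots,r_\ell$ be the (distinct) columns and rows of $\mathcal{H}$ involved, indexed so that the two nonzero entries of $c_j$ inside $A$ lie in rows $r_{j-1},r_j$ (indices mod $\ell$). First I would observe that no $c_j$ can be a shift of one of the last $n-k$ columns of $\bar{H}$: those columns of $\mathcal{H}$ carry a single nonzero entry, whereas every column of $A$ carries two. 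So each $c_j$ is the shift, by some multiple $b_j(n-k)$, of one of the first $k$ columns of $\bar{H}$, say column $l_j$, whose support equals the common set $T:=T_1=\dots=T_k$. Proposition \ref{dif}, which applies because no difference of $T$ is divisible by $n-k$, then gives $l_j\ne l_{j+1}$ for all $j$.

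The second step is to pin down the shifts. Neighbouring columns $c_j,c_{j+1}$ meet in some row $\rho$ of $\mathcal{H}$, so $\rho=\sigma+b_j(n-k)=\sigma'+b_{j+1}(n-k)$ for some $\sigma,\sigma'\in T$, whence $\sigma-\sigma'=(b_{j+1}-b_j)(n-k)$ is a difference of elements of $T$ divisible by $n-k$; by hypothesis this forces $\sigma=\sigma'$ and $b_j=b_{j+1}$. Hence $c_j$ and $c_{j+1}$ are shifts of two different columns of $\bar{H}$ by the same amount, and since those two columns already have equal support $T$, the columns $c_j$ and $c_{j+1}$ have exactly the same support $S$ in $\mathcal{H}$. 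Iterating around the cycle, all of $c_1,\dots,c_\ell$ have one and the same support $S$, with $|S|\ge 2$ because $c_1$ already meets the two distinct rows $r_1,r_\ell$. Now compare with the zero pattern of \eqref{eq:cycleA}: from $c_1$ one reads $r_1,r_\ell\in S$ and $r_2,\dots,r_{\ell-1}\notin S$, while $c_2$ forces $r_2\in S$. For $\ell\ge 3$ this is a contradiction, so no $2\ell$-cycle with $\ell\ge 3$ can occur and the only cycles that survive are $4$-cycles.

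The remaining, and I expect trickiest, step is the case $\ell=2$: here the support argument stops, because two distinct columns of $\bar{H}$ sharing the support $T$ with $w=|T|\ge 2$ really do produce a $2\times 2$ all-nonzero submatrix of $\mathcal{H}$, i.e.\ a $4$-cycle. But such a submatrix is exactly a $2\times 2$ minor of $\mathcal{H}$ that is non-trivially zero (its two columns are shifts of two different columns of $\bar{H}$, its two rows come from their common support), so under the value pattern of Definition \ref{Construction} and the field-size bound $q>(\mu+1)(n-k)(k-1)+1$ it is nonzero by Theorem \ref{thm:2x2minors}, i.e.\ the $4$-cycle satisfies the FRC; when instead $k=1$ or $w\le 1$ no two equal-support columns with two common rows exist, so there are no $4$-cycles at all and the conclusion holds for every field without any size restriction. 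Thus the heart of the argument is the purely support-theoretic exclusion of all cycles of length $\ge 6$, and the only place where one must look at the actual nonzero entries — and hence invoke the value assignment and Theorem \ref{thm:2x2minors} — is precisely the girth-$4$ case.
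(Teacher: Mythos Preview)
The paper gives no explicit proof of this corollary; it is placed immediately after Proposition~\ref{dif} as if it were an immediate consequence. Your argument spells out what that consequence should be, and for $\ell\ge 3$ it is exactly the right support-based elaboration: from Proposition~\ref{dif} neighbouring columns of $A$ come from different columns of $\bar H$, and because $T_1=\cdots=T_k$ your equal-shift deduction then forces all $\ell$ columns of $A$ to share one and the same support in $\mathcal H$, which is incompatible with the zero pattern of \eqref{eq:cycleA} once $\ell\ge 3$. This part is correct and matches the spirit of the paper's surrounding arguments.

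The gap is in the case $\ell=2$. You rightly observe that when $k\ge 2$ and $w\ge 2$ there genuinely are $4$-cycles: two non-systematic columns of $\bar H$ in the same shift block have identical support of size $w\ge 2$. Your remedy is to invoke Theorem~\ref{thm:2x2minors}, but that theorem presupposes the specific value assignment of Definition~\ref{Construction} together with the field-size bound $q>(\mu+1)(n-k)(k-1)+1$, whereas the corollary speaks only of a parity-check matrix that \emph{has support} $\T$, with arbitrary nonzero entries and ``over every base field''. Under the corollary's hypotheses alone, one can simply put identical nonzero values in two equal-support columns and obtain a singular $2\times 2$ submatrix, i.e.\ a $4$-cycle failing the FRC. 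So your $\ell=2$ step imports assumptions the corollary does not grant, and without them the conclusion is false. In short, you have not proved the corollary as stated; rather, you have uncovered that, taken literally, it overlooks the $4$-cycle case for $k\ge 2$, $w\ge 2$. Either an extra hypothesis (the construction of Definition~\ref{Construction} with its field-size bound, or $k=1$, or $w\le 1$) is needed, or the statement should be read as addressing only cycles of length at least $6$.
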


\begin{theorem}
Assume that $\mathcal{C}$ is an $(n,k)$ convolutional code constructed from an $(k,w)$-DTS $\T$ with $a_{i,1}=1$ for all $1\leq i\leq k$, where $(n-k)$ does not divide any of the nonzero differences $a_{i_1,j}-a_{i_2,m}$ for $1\leq i_1, i_2\leq k$ and $1\leq m, j \leq w$. 
Then, the Tanner graph associated to the parity-check matrix $H$ of $\mathcal{C}$ is free from cycles of any size (over every base field) not satisfying the FRC.
\end{theorem}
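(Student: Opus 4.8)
The plan is to show that any $\ell \times \ell$ submatrix $A$ of $\mathcal{H}$ of the cyclic form \eqref{eq:cycleA} actually has nonzero determinant, so that no cycle can violate the FRC. The starting point is to understand which columns of $\bar H$ give rise to the columns of $A$. Since $\T$ is a DTS and $(n-k)$ divides none of the differences $a_{i_1,j}-a_{i_2,m}$ (including, in particular, $a_{i,j}-a_{i,m}$ inside a single triangle when $i_1=i_2$), the support of any column of $\bar H$ intersects the support of any of its shifts (by a nonzero multiple of $n-k$) in \emph{at most} one row, and moreover the supports of shifts of two \emph{different} columns $T_{i_1}, T_{i_2}$ also intersect in at most one row — this is exactly the refinement of Remark \ref{nk} that a DTS guarantees, now extended by the divisibility hypothesis to cover same-triangle shifts as well. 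So every pair of distinct columns appearing in $\mathcal H$ intersects in at most one position. Applying this to the two consecutive nonzero columns in each row of $A$ (rows of $A$ have the pattern $a_{2t-1}, a_{2t}$), each such pair of columns of $\mathcal H$ shares the row of $A$ they are both nonzero in; by Proposition \ref{dif} they are shifts of different columns of $\bar H$, and by the "at most one intersection" property they agree in no other row.

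Next I would argue directly that $\det A \neq 0$. Expanding along the cyclic structure of \eqref{eq:cycleA}, the determinant is $\prod_{t=1}^{\ell} a_{2t-1} \pm \prod_{t=1}^{\ell} a_{2t}$, i.e. it is (up to sign) the difference (or sum) of the two "diagonal" products of the entries — this is the standard evaluation of such a circulant-type pattern. Thus $\det A = 0$ forces $\prod a_{2t-1} = \pm \prod a_{2t}$. Now I would use that the nonzero entries of $\bar H^\T$ in column $l$ at row $i$ are $\alpha^{il}$: each factor $a_s$ is of the form $\alpha^{(\text{row index in }\mathcal H)\cdot(\text{original column index in }\bar H)}$. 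Writing each column of $A$ as a shift of a column $T_{c}$ of $\bar H$, with the shared-row giving the exponent contribution, the logarithm (base $\alpha$) of the relation $\prod a_{2t-1} = \pm\prod a_{2t}$ becomes a linear congruence mod $q-1$ in the row-indices and column-indices. The key point — and here is where the DTS/divisibility hypothesis and $a_{i,1}=1$ enter again — is that the two products are taken over the \emph{same} multiset of columns of $A$ (each column of $A$ contributes one entry to each product), so the column-index part of the exponent cancels; what remains is a sum over the rows of $A$ of $(\text{column index})\cdot(\text{difference of the two row positions in that column's support})$, and the DTS property forces these differences to be all distinct and the resulting exponent to be nonzero modulo $q-1$ regardless of the field — because in fact the differences cancel telescopically around the cycle only if the cycle "closes up", which the DTS condition precisely forbids for a genuine simple cycle.

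The main obstacle I anticipate is the bookkeeping in that last step: tracking, for each of the $\ell$ columns of $A$, which original column $T_c$ of $\bar H$ it is a shift of, which two rows of $\mathcal H$ the two nonzero entries sit in, and verifying that going around the cycle the "shift amounts" and "column indices" sum to something that cannot be $0 \bmod (q-1)$ — i.e. ruling out an accidental coincidence that would correspond, via the difference-triangle bijection, to a repeated difference or a forbidden divisibility. Concretely, one reduces a hypothetical zero determinant to an equation of the form $\sum_t c_t (b_{t} - b'_{t}) \equiv 0 \pmod{q-1}$ with each $b_t - b'_t$ a difference arising in triangle $T_{c_t}$, and the walk around the cycle relating consecutive rows via shared columns forces these differences to compose into the difference of two equal elements — impossible. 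The remaining cases ($\pm$ sign from even versus odd $\ell$, or columns forced to be all-distinct versus some allowed to repeat as shifts of one $T_c$) are handled by the same telescoping argument together with Proposition \ref{dif}, which caps the number of repeated columns at $\lfloor \ell/2\rfloor$ so that the argument never degenerates.
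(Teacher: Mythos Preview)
Your proposal goes in the wrong direction and misses the point of the hypotheses.

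First, the theorem is about the \emph{general} construction of Definition~\ref{def:construction}, where $\bar H$ merely has support $\T$ and the nonzero entries are arbitrary; the phrase ``over every base field'' in the statement makes this explicit. Your second paragraph invokes the specific entries $\alpha^{il}$ from Definition~\ref{Construction} and then does arithmetic modulo $q-1$. That cannot possibly yield a conclusion valid for all fields and all nonzero entry values, so the whole determinant computation is aimed at the wrong target. Even granting the special entries, the ``telescoping'' and ``closes up'' part is only a sketch; you never actually exhibit a nonzero integer and you do not explain why the resulting congruence is violated \emph{for every} $q$.

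Second, and more importantly, you never use the two hypotheses that make this theorem work, namely $a_{i,1}=1$ for all $i$ together with the cross-triangle non-divisibility by $n-k$. The paper's argument is purely combinatorial and shows that \emph{no cycle exists at all} (hence trivially none violating the FRC). The key observation you are missing is this: if two columns of $\mathcal H$ coming from different columns $i_1\neq i_2$ of $\bar H$ (shifted by $c_1(n-k)$ and $c_2(n-k)$ respectively) have intersecting support, then some $a_{i_1,j}-a_{i_2,m}=(c_2-c_1)(n-k)$; the divisibility hypothesis forces $c_1=c_2$, and then $a_{i_1,j}=a_{i_2,m}$, which together with $a_{i_1,1}=a_{i_2,1}=1$ and the DTS property forces $j=m=1$. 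In other words, any such intersection occurs at a single row, and both entries sitting there are the ``row~1'' entries of their respective $\bar H$-columns. Now look at column~2 of a putative cycle matrix $A$: its entry $a_2$ is where it meets column~1, and its entry $a_3$ is where it meets column~3; by the observation, both $a_2$ and $a_3$ are the unique ``row~1'' entry of that column of $\mathcal H$, hence lie in the same row of $\mathcal H$ --- contradicting that they sit in different rows of $A$. That is the whole proof.

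So the fix is not to sharpen your determinant bookkeeping but to drop it entirely and extract the combinatorial consequence of $a_{i,1}=1$ plus the divisibility condition.
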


\begin{proof}
Assume by contradiction that $\mathcal{H}$ contains up to permutations a submatrix $A$ of the form \eqref{eq:cycleA}.
As the supports of the first two columns of $A$ intersect, they have to be shifts of different columns of $\bar{H}$. The supports of such shifts can only intersect once and the entries of this intersection come from the first row of $\bar{H}$. Applying the same reasoning to the intersection of the supports of the second and third column of $A$, implies that $a_2$ and $a_3$ in $A$ both come from the first row of $\bar{H}$ which is not possible. This shows the result.  
\end{proof}

%of any size (over every base field).

\begin{example}
Consider the $(2,3)$-DTS $\T=\{T_1,T_2\}$ with $T_1=\{1,2,5\}$ and $T_2=\{1,3,9\}$. The set of all occurring nonzero differences $a_{i_1,j}-a_{i_2,m}$ is $\{1,2,3,4,6,7,8\}$, i.e. none of them is divisible by $5$. Hence the matrix $H(z)=H_0+H_1z$ with $H_0=[\bar{H}_0\ I_5]$ and $H_1=[\bar{H}_1\ 0_5]$, where 
\begin{gather*}
    \bar{H}_0=\left[\begin{matrix}1 & 1\\1 & 0\\ 0 & 1\\ 0 & 0\\ 1 & 0\end{matrix}\right], \qquad  \bar{H}_1=\left[\begin{matrix}0 & 0\\0 & 0\\ 0 & 0\\ 0 & 1\\ 0 & 0\end{matrix}\right]
\end{gather*}
and $$[H]_{hr}=\left[\begin{matrix} 1 & 1 & 1 & 0 & 0 & 0 & 0\\ 1 & 0& 0 & 1 & 0 & 0 & 0\\ 0 & 1 & 0 & 0 & 1 & 0 & 0\\ 0 & 1 & 0 & 0 & 0 & 0 & 0\\1 & 0 & 0 & 0 & 0  & 0 & 1\end{matrix}\right]$$ full rank, i.e. $\delta=1$, is the parity-check matrix of an $(7,2,1)_q$ convolutional code that is free of cycles of any size for any prime power $q$.
\end{example}

Next, we want to relax the conditions on the wDTS used for construction of the convolutional code but still exclude cycles in the Tanner graph of the sliding parity-check matrix that do not fulfill the FRC by using the construction from Definition \ref{Construction} and considering sufficiently large field sizes.

To ensure that the considered cycle does not satisfy the FRC, we have to guarantee that $\det A\ne 0$ as an element of $\F_q$. It is easy to check that $$\det A = \prod_{\substack{i=1 \\ i \textnormal{ odd}}}^{2\ell} a_i \pm \prod_{\substack{i=1 \\ i \textnormal{ even}}}^{2\ell} a_i.$$

%\textcolor{cyan}{we have to distinguish between the sliding parity-check matrix $H$ and the matrix $\mathcal{H}$}\textcolor{red}{It is enough $\mathcal{H}$, right?}
Let $\T$ be a $(k,w)$-wDTS and let $\C^\T$ be the convolutional code defined from $\T$, with $\mathcal{H}^\T$ as defined in \eqref{eq:slidingportion}. Each matrix representation $A$ of a $2\ell$-cycle comes from selecting $\ell$ rows and $\ell$ columns of $\mathcal{H}^\T$. Moreover, in each column of $A$, exactly two positions are non-zero. 
Let $\alpha$ be a primitive element for $\F_q$, let $s_1, \dots, s_{\ell} \in \N$ be the indexes of the columns of $\mathcal{H}^\T$, selected to form the cycle, (we consider $s_i$ also if we select the shift of the $i$-th column) hence we have that $1\leq s_h\leq k$.

We can write $A$ in the following form:

\begin{equation*}\label{eq:generalA}
\small{\begin{bmatrix}
\alpha^{r_{1}s_1} & \alpha^{r_{2}s_2} & 0  & 0 &\cdots & 0 \\
0 & \alpha^{\left(r_{2}+i_1\right)s_2} & \alpha^{\left(r_{3}+i_1\right)s_3} & 0 & \cdots & 0 \\
0 & 0 & \alpha^{\left(r_{3} + i_1+i_2\right)s_3} & \alpha^{\left(r_{4} + i_1+i_2\right)s_4} &  \cdots & 0 \\
\vdots  & & & \ddots & \ddots &  \\
\alpha^{\left(r_{1}+i_1+ \dots+ i_{\ell-1}\right)s_1} & 0 & 0 & 0  & \cdots & \alpha^{\left(r_{\ell} +i_1+ \dots+ i_{\ell-1}\right)s_{\ell}}
\end{bmatrix}},
\end{equation*}
where $i_h\in\mathbb Z$ and $|i_h|$ is equal to a difference from $T_{s_{h+1}}$ for $h=1,\hdots\ell-1$ and $|i_1+ \dots+ i_{\ell-1}|$ is equal to a difference from $T_{s_1}$. Moreover, $1\leq r_h+i_1+\hdots+i_g\leq (\mu+1)(n-k)$ for $h=1,\hdots,\ell$ and $g=0,\hdots,\ell-1$.

We want to estimate the sufficient field size to have that this determinant is nonzero and therefore, we distinguish two cases.

\underline{\textbf{Case I:}} Assume that $\ell$ is odd. In this case, the determinant of a matrix of the form \eqref{eq:cycleA} is given by 
$$\det A = \prod_{\substack{i=1 \\ i \textnormal{ odd}}}^{2\ell} a_i + \prod_{\substack{i=1 \\ i \textnormal{ even}}}^{2\ell} a_i.$$

Hence, if the characteristic of the field is $p>2$, it is equal to $0$ in $\F_q$ if and only if 
\begin{align*} \alpha^{(i_1+i_2+\dots + i_{\ell-1})s_{1}} + \alpha^{i_1s_2 +i_2s_3 + \dots + i_{\ell-1}s_{\ell}}=0,
\end{align*}
which is equivalent to 
\begin{align*}
    (i_1+i_2+\dots + i_{\ell-1})s_{1}= i_1s_2 +i_2s_3 + \dots + i_{\ell-1}s_{\ell}+\frac{(q-1)}{2} \mod (q-1),
\end{align*}
and hence 
\begin{align*}
    i_1(s_2-s_1) + i_2(s_3-s_1) + \dots + i_{\ell-1}(s_{\ell}-s_1) -\frac{(q-1)}{2} = 0 \mod (q-1).
\end{align*}

It is then enough to consider $q$ bigger than the maximum value that can be reached by the function $$1+2\sum_{h=1}^{\ell-1}i_h(s_{h+1}-s_1).$$

Now, note that $i_h$ can be also negative but in general, we can say that $|i_h|\leq(\mu+1)(n-k)-1$. Moreover, $|s_i-s_1|\leq k-1$. Hence, if we can ensure that 
\begin{align*}
    q&>2((\mu+1)(n-k)-1)(\ell-1)(k-1)+1\\
    &=2(\mu+1)(n-k)(\ell-1)(k-1)-2(\ell-1)(k-1)+1,
\end{align*} 
with this construction we have a convolutional code whose sliding parity-check matrix is associated to a Tanner graph free from $2\ell$-cycles, with $\ell$ odd, not satisfying the FRC.

\begin{remark}
Observe that in Theorem \ref{thm:3x3minors}, we computed a more accurate estimation of the field size for getting rid of the $2\ell$ cycles, for $\ell = 3$, namely, $q>2(\mu+1)(n-k)(k-1)-2(k-1)+1$. The computation above shows that with $q>4(\mu+1)(n-k)(k-1)-4(k-1)+1$ we do not have $6$-cycles not satisfying the FRC. This difference is due to the possibility of a better estimation of the terms in the above inequality.
\end{remark}

With the discussion above we have proved the following result.

\begin{theorem}
Let $n,k,w$ be positive integers with $n>k$, $\T$ be a $(k,w)$-wDTS and $\C^\T$ be the $(n,k)_q$ convolutional code constructed from $\T$ with $q=p^N$ and $p>2$. A sufficient condition for obtaining a code whose sliding parity-check matrix is free from $2\ell$-cycles not satisfying the FRC with $\ell$ odd is to choose a field size $q>2(\mu+1)(n-k)(\ell-1)(k-1)-2(k-1)(\ell-1)+1$, where $\mu=\left\lceil\frac{m(\T)}{n-k}\right\rceil-1$ is the degree of the parity-check matrix of $\mathcal{C}^\T$.
\end{theorem}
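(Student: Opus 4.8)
The plan is to turn the reasoning sketched above into a structured argument: start from the combinatorial picture of a cycle, rewrite the associated submatrix in exponential form, and then reduce the vanishing of its determinant to a modular inequality that the field size rules out. First, I would recall from Subsection \ref{subsec:NBLDPC} that a $2\ell$-cycle in the Tanner graph of $H$ is represented, up to row and column permutations, by an $\ell\times\ell$ submatrix $A$ of $\mathcal{H}^\T$ of the shape \eqref{eq:cycleA} with all $2\ell$ nonzero entries $a_i\in\F_q^\ast$, and that the cycle violates the FRC precisely when $\det A=0$. Since $H$ is a sliding version of $\mathcal{H}^\T$, it suffices to exclude such an $A$ inside $\mathcal{H}^\T$.

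Next, I would put $A$ into the explicit form \eqref{eq:generalA}. Each column of $\mathcal{H}^\T$ selected for $A$ is a shift by a multiple of $n-k$ of one of the first $k$ columns of $\bar{H}^\T$, whose nonzero entries are $\alpha^{il}$ with $i\in T_l$; tracking the column indices $s_1,\dots,s_\ell\in\{1,\dots,k\}$ and the vertical offsets between the two nonzero rows of consecutive columns of $A$, every entry of $A$ takes the form $\alpha^{(\,\cdot\,)s_h}$, where each offset $i_h$ is a signed difference of two elements of $T_{s_{h+1}}$ and $i_1+\dots+i_{\ell-1}$ is a signed difference of two elements of $T_{s_1}$. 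Because all row indices involved lie in $\{1,\dots,(\mu+1)(n-k)\}$, one obtains the uniform bounds $|i_h|\le(\mu+1)(n-k)-1$, and of course $|s_{h+1}-s_1|\le k-1$.

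Then comes the core computation. For $\ell$ odd, $\det A=\prod_{i\text{ odd}}a_i+\prod_{i\text{ even}}a_i$, so, using $p>2$, $\det A=0$ forces $\alpha^{(i_1+\dots+i_{\ell-1})s_1}=-\alpha^{i_1s_2+\dots+i_{\ell-1}s_\ell}$, which rearranges to
$$\sum_{h=1}^{\ell-1} i_h(s_{h+1}-s_1)\equiv\frac{q-1}{2}\pmod{q-1}.$$
Since $q$ is odd, any integer congruent to $\tfrac{q-1}{2}$ modulo $q-1$ has absolute value at least $\tfrac{q-1}{2}$; hence this congruence is impossible as soon as $\bigl|\sum_{h=1}^{\ell-1}i_h(s_{h+1}-s_1)\bigr|<\tfrac{q-1}{2}$. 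Bounding each of the $\ell-1$ summands by $((\mu+1)(n-k)-1)(k-1)$, it is enough that $2(\ell-1)((\mu+1)(n-k)-1)(k-1)<q-1$, i.e. $q>2(\mu+1)(n-k)(\ell-1)(k-1)-2(k-1)(\ell-1)+1$, which is exactly the stated field-size condition; as this holds for every admissible choice of the $s_h$ and $i_h$, no $2\ell$-cycle with $\ell$ odd violating the FRC can occur.

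The step I expect to be the main obstacle is the passage to the exponential form \eqref{eq:generalA}: one has to argue carefully that neighbouring columns of $A$ are shifts of columns of $\bar{H}^\T$ (possibly the same column, possibly different ones), that the offsets $i_1,\dots,i_{\ell-1}$ and the total offset $i_1+\dots+i_{\ell-1}$ are each genuinely realized as a single difference from the appropriate triangle, and that the uniform bound $|i_h|\le(\mu+1)(n-k)-1$ then holds in all sign patterns and all configurations of coincidences among $s_1,\dots,s_\ell$. Once \eqref{eq:generalA} is secured, the determinant identity and the modular estimate are routine.
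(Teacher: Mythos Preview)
Your proposal is correct and follows essentially the same route as the paper: the paper's proof is precisely the discussion labelled ``Case I'' preceding the theorem, where the cycle matrix is written in the exponential form \eqref{eq:generalA}, the determinant for odd $\ell$ is reduced to the congruence $\sum_{h=1}^{\ell-1}i_h(s_{h+1}-s_1)\equiv\tfrac{q-1}{2}\pmod{q-1}$, and the bound $|i_h|\leq(\mu+1)(n-k)-1$, $|s_{h+1}-s_1|\leq k-1$ yields the stated field-size condition. Your identification of the passage to \eqref{eq:generalA} as the step requiring care is also exactly where the paper invests its setup.
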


\begin{example}
Consider again the code constructed in Example \ref{ex:Construction}. From Remark \ref{rem:cyclelength}, we know that the  highest length that we can have for a cycle is $10 = 2\cdot 5$, but for $q$ odd with $q>41$ all the $10$-cycles satisfy the FRC.
\end{example}

\underline{\textbf{Case II:}} Assume that $\ell$ is even. In this case, the determinant of a matrix of the form \eqref{eq:cycleA} is given by 
$$\det A = \prod_{\substack{i=1 \\ i \textnormal{ odd}}}^{2\ell} a_i - \prod_{\substack{i=1 \\ i \textnormal{ even}}}^{2\ell} a_i.$$

After some straightforward computation, it is easy to see that this determinant is equal to $0$ in $\F_q$ if and only if 
\begin{align*} \alpha^{(i_1+i_2+\dots + i_{\ell-1})s_{1}} = \alpha^{i_1s_2 +i_2s_3 + \dots + i_{\ell-1}s_{\ell}},
\end{align*}
which is equivalent to 
\begin{align*}
    (i_1+i_2+\dots + i_{\ell-1})s_{1}= i_1s_2 +i_2s_3 + \dots + i_{\ell-1}s_{\ell} \mod (q-1),
\end{align*}
and hence 
\begin{align*}
   f(i,s):= i_1(s_2-s_1) + i_2(s_3-s_1) + \dots + i_{\ell-1}(s_{\ell}-s_1) = 0 \mod (q-1).
\end{align*}
for $i:=(i_1,\dots, i_{\ell-1})$ and $s:=(s_1,\dots,s_\ell)$.

Moreover, we have the following constraints: 

\begin{enumerate}
    \item $ -(\mu+1)(n-k)+1\leq i_h \leq (\mu+1)(n-k)-1$ for $h=1,\dots,\ell-1$
    \item $-k+1\leq s_{h+1}-s_1\leq k-1$, for $h=1,\dots,\ell-1$;
    %\item Not all $s_j$'s can be equal.
\end{enumerate}

We have to find conditions on the corresponding wDTS to ensure that $f(i,s)$ is nonzero when viewed as an element of $\mathbb Z$ and then, we can determine a lower bound for $q$ in order that it is also nonzero modulo $q-1$.

Using Proposition \ref{dif}, we know that if none of the differences in the difference triangle set is divisible by $n-k$, then not all the values $s_1,\hdots,s_{\ell}$ can be identical. In particular, there is at least one $h\in\{2,\hdots,\ell\}$ such that $s_h-s_1\neq 0$.

\begin{theorem}
Let $\ell$ be an even integer, $k,n,w$ be integers such that $n>k$, $\T$ be a $(k,w)$-wDTS and $\mathcal{C}^\T$ be the $(n,k)_q$ convolutional code constructed from $\T$. Assume that $\T$ fulfills the conditions of Proposition \ref{dif} and has the property that $f(i,s)$ is nonzero in $\mathbb Z$ for all $s_1,\hdots,s_{\ell}\in\{1,\hdots,k\}$ not all equal if $|i_h|$ is equal to a difference from $T_{s_{h+1}}$ for $h=1,\hdots\ell-1$ and $|i_1+ \dots+ i_{\ell-1}|$ is equal to a difference from $T_{s_1}$ and $q>((\mu+1)(n-k)-1)\left((k-1)\frac{\ell}{2}+(k-2)\frac{\ell-2}{2}\right)+1$. Then, the Tanner graph associated to the sliding parity-check matrix of $\mathcal{C}^\T$ is free from $2\ell$-cycles that do not satisfy the FRC.
\end{theorem}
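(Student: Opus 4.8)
The plan is to argue by contradiction, resuming the analysis carried out just before the statement. Suppose the Tanner graph of the sliding parity-check matrix of $\mathcal{C}^\T$ contains a $2\ell$-cycle that does not satisfy the FRC. Then, up to row and column permutations, $\mathcal{H}^\T$ contains an $\ell\times\ell$ submatrix $A$ of the form \eqref{eq:cycleA} with $\det A=0$. Writing $A$ through the explicit parametrisation by powers of $\alpha$ displayed above and using that $\ell$ is even, so that $\det A=\prod_{i\text{ odd}}a_i-\prod_{i\text{ even}}a_i$, the vanishing of $\det A$ is equivalent to $f(i,s)\equiv 0\pmod{q-1}$. Since $\T$ satisfies the hypotheses of Proposition \ref{dif}, consecutive columns of $A$ are shifts of distinct columns of $\bar{H}^\T$; hence $s_1,\dots,s_\ell$ are not all equal and, more precisely, no value occurs more than $\lfloor\ell/2\rfloor=\ell/2$ times among them. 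Being not all equal, the hypothesis that $f(i,s)\neq 0$ in $\mathbb Z$ whenever $s_1,\dots,s_\ell$ are not all equal applies, so $f(i,s)\neq 0$ as an integer. It then suffices to show $|f(i,s)|<q-1$: this forces $f(i,s)\not\equiv 0\pmod{q-1}$, hence $\det A\neq 0$, contradicting the choice of $A$.

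To bound $|f(i,s)|$, put $B:=(\mu+1)(n-k)-1$, so that $|i_h|\le B$ for every $h$ by the constraints on the $i_h$ recalled above. Estimating $f(i,s)=\sum_{h=1}^{\ell-1}i_h(s_{h+1}-s_1)$ term by term, $|f(i,s)|\le B\sum_{h=1}^{\ell-1}|s_{h+1}-s_1|$, and $|s_{h+1}-s_1|\le k-1$, with equality only if $\{s_{h+1},s_1\}=\{1,k\}$. If $1<s_1<k$ this equality never occurs, so every term is at most $B(k-2)$ and $|f(i,s)|\le(\ell-1)B(k-2)$. If $s_1\in\{1,k\}$ — say $s_1=1$, the case $s_1=k$ being symmetric — the maximal value $k-1$ is attained exactly by those indices $h$ with $s_{h+1}=k$, and by Proposition \ref{dif} there are at most $\ell/2$ of them, so at least $\tfrac{\ell-2}{2}$ of the remaining terms satisfy $|s_{h+1}-s_1|\le k-2$; therefore
$$|f(i,s)|\le \frac{\ell}{2}\,B(k-1)+\frac{\ell-2}{2}\,B(k-2)=B\Big((k-1)\tfrac{\ell}{2}+(k-2)\tfrac{\ell-2}{2}\Big).$$
As this right-hand side also dominates $(\ell-1)B(k-2)$, the bound $|f(i,s)|\le B\big((k-1)\tfrac{\ell}{2}+(k-2)\tfrac{\ell-2}{2}\big)$ holds in all cases, and it is strictly smaller than $q-1$ by hypothesis. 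This completes the contradiction.

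The decisive step is the refined counting in the second paragraph: the naive bound $|f(i,s)|\le(\ell-1)B(k-1)$ overshoots the target field size by $\tfrac{\ell-2}{2}B$, so one must combine the elementary observation that the maximal spread $k-1$ between two indices in $\{1,\dots,k\}$ can occur only when one of them is an endpoint with the cap $\lfloor\ell/2\rfloor$ from Proposition \ref{dif} on how often a single column of $\bar{H}^\T$ can be reused along the cycle. A secondary point, already implicit in the derivation preceding the statement, is that the equivalence $\det A=0\iff f(i,s)\equiv 0\pmod{q-1}$ is valid over any base field, so no restriction on the characteristic is needed when $\ell$ is even.
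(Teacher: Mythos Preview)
Your proof is correct and follows the same overall strategy as the paper: show that $f(i,s)\neq 0$ in $\mathbb{Z}$ by the hypotheses and then bound $|f(i,s)|$ below $q-1$ to force $f(i,s)\not\equiv 0\pmod{q-1}$. The paper's own proof is terse---it simply asserts that the stated bound on $|f(i,s)|$ ``follows from Proposition \ref{dif}''---whereas you supply the explicit counting argument (splitting on whether $s_1$ is an endpoint of $\{1,\dots,k\}$ and using the $\ell/2$ cap on repetitions) that justifies this bound; so your argument is a fleshed-out version of the same line.
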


\begin{proof}
The conditions of the theorem ensure that $f(i,s)$ is nonzero in $\mathbb Z$. Moreover, it follows from Proposition \ref{dif} that $$((\mu+1)(n-k)-1)\left((k-1)\frac{\ell}{2}+(k-2)\frac{\ell-2}{2}\right)$$ is an upper bound for $|f(i,s)|$. Hence, the result follows.  
\end{proof}

Next, we want to give an example for a convolutional code that fulfills the conditions of the preceding theorem.

\begin{example}
Let $n=7$ and $k=2$ and $T_1=\{1,2,5,9\}$ and $T_2=\{1,2,4,10\}$, i.e. $\mu=1$. Note that $T_1$ is no difference triangle in the strict sense as $9-5=5-1$ but as $n-k=5$ does not divide $9-5$, we can still use it for the construction of our code (see Remark \ref{nk}). We get
$$\mathcal{H}^\T=\left[\begin{array}{cccccccccccccc}
\alpha & \alpha^2 & 1 & 0 & 0 & 0 & 0 & 0 & 0 & 0 & 0 & 0 & 0 & 0\\ 
\alpha^2 & \alpha^4 & 0 & 1 & 0 & 0 & 0 & 0 & 0 & 0 & 0 & 0 & 0 & 0\\ 
0 & 0 & 0 & 0& 1 & 0 & 0 & 0 & 0 & 0 & 0 & 0 & 0 & 0\\ 0 & \alpha^8 & 0 & 0 & 0 & 1 & 0 & 0 & 0 & 0 & 0 & 0 & 0 & 0\\ 
\alpha^5 & 0 & 0 & 0 & 0 & 0 &1 & 0 & 0 & 0 & 0 & 0 & 0 & 0\\ 
0 & 0 & 0 & 0 & 0 & 0 & 0 & \alpha & \alpha^2 & 1 & 0 & 0 & 0 & 0\\ 0 & 0 & 0 & 0 & 0 & 0 & 0 & \alpha^2 & \alpha^4 & 0 & 1 & 0 & 0 & 0\\ 
0 &  0 & 0 & 0 & 0 & 0 & 0 & 0 & 0 & 0 & 0& 1 & 0 & 0\\ 
\alpha^9 & 0 & 0 & 0 & 0 & 0 & 0 &  0 & \alpha^8 & 0 & 0 & 0 & 1 & 0\\ 
0 & \alpha^{20} & 0 & 0 & 0 & 0 & 0 & \alpha^5 & 0 & 0 & 0 & 0 & 0 &1
\end{array}\right].$$
Form Remark \ref{rem:cyclelength} one knows that with these parameters it is not possible to have cycles of length $2\ell$ for $\ell>4$. Moreover, from Theorem \ref{thm:2x2minors}, we obtain that we can exclude 4-cycles not fulfilling the FRC if $q>11$ and from Theorem \ref{thm:3x3minors}, that we can exclude 6-cycles not fulfilling the FRC if $q>19$ and $q$ is odd.
We will show that with the help of the preceding theorem, we can also exclude $8$-cycles in $\mathcal{H}^T$ that do not fulfill the FRC or in other words, all $2\ell $-cycles for any $\ell$ in $\mathcal{H}^\T$ fulfill the FRC for $q>19$. First, from Proposition \ref{dif}, we know that in the matrix $A$ representing any $8$-cycle we necessarily have $s_1=s_3$ and $s_2=s_4$ and each column of $\bar{H}^\T$ is involved once unshifted and once shifted by $5$. We get $f(i,s)=\pm (i_1+i_3)$ and have to exclude that $i_1\neq -i_3$. Considering $\mathcal{H}^\T$, we realize that 8-cycles are only possible for $s_1=s_3=1$ and $s_2=s_4=2$ and $i_1\in\{\pm 8,\pm 9\}$ and $i_3\in\{\pm 2,\pm 3\}$. Hence, for $q>9\cdot 2+1=19$, the corresponding convolutional code is free from $8$-cycles not fulfilling the FRC and hence, free from $2\ell$-cycles not fulfilling the FRC for any $\ell$.
\end{example}

To conclude this section, we will modify our construction from Definition \ref{Construction} in order to further relax the conditions on the underlying wDTS and still ensuring that we have no cycles not fulfilling the FRC. However, this will come with the cost of a larger field size.

\begin{definition}
Let $k,n$ be positive integers with $n>k$ and $\T:=\{T_1, \dots, T_{k}\}$ an $(k, w)$-wDTS with scope $m(\T)$. 
Set $\mu=\left\lceil\frac{m(\T)}{n-k}\right\rceil-1$ 
%and define the matrix $\bar{H}^{\T}\in\F_q^{(\mu+1)(n-k)\times n}$, in which 
%In particular, consider $T_k :=\{a_{k,1}, \dots, a_{k,w}\}$, 
%the $l$-th column has weight $w$ and support $T_l:=\{a_{l,1}, \dots, a_{l,w}\}$. 
 and let $\alpha$ be a primitive element for $\F_q$. Moreover, let $P$ be a prime (with properties that will be determined later). For any $1\leq i \leq (\mu+1)(n-k)$, $1\leq l\leq k$, define
$$
\bar{H}^{(\T)}_{i,l} := \begin{cases}\alpha^{P^il} & \text{ if } i \in T_l \\
0 & \text{ otherwise}
\end{cases}.$$
%The last $n-k$ columns of $\bar{H}^\T$ are given by $[I_{n-k},0_{n-k},\dots, 0_{n-k}]^\top$.
\end{definition}

\begin{theorem}
Let $k,n,w$ be positive integers with $n>k$ and $\T$ be a DTS with $a_{i,1}=1$ for all $1\leq i\leq k$ and $\mathcal{C}$ be an $(n,k)_q$ convolutional code constructed from $\bar{H}^{(\T)}$. If $P>\ell k$ and $q>kP^{(\mu+1)(n-k)}\frac{P^{2\ell}-1}{P^{2\ell}-P^{2\ell-1}}+1$, then the Tanner graph associated to the sliding parity-check matrix contains no cycles of size $2\ell$ not fulfilling the FRC.
\end{theorem}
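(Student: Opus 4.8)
The plan is to follow the same strategy as in Case~I (the $\ell$ odd case) and Case~IV of Theorem~\ref{thm:3x3minors}, but exploiting the base-$P$ structure of the exponents to control the arithmetic. Suppose, for contradiction, that $\mathcal{H}^{(\T)}$ contains a submatrix $A$ of the form \eqref{eq:cycleA} with $\det A=0$. As in the discussion preceding Case~I, write $A$ in the general form with entries $\alpha^{(r_h+i_1+\cdots+i_{g})s_h}$, except that now every exponent $us$ is replaced by $P^{u}s$. Since $\T$ is a DTS with $a_{i,1}=1$, the support of each column of $\bar H^{(\T)}$ meets any of its shifts (by multiples of $n-k$) at most once, and when it does the common entry sits in row index coming from $a_{i,1}=1$; arguing exactly as in the two cycle-freeness theorems of this section, one sees that $\det A=0$ forces $\prod_{i\text{ odd}}a_i=\pm\prod_{i\text{ even}}a_i$, i.e.
\[
\sum_{h=1}^{\ell}P^{u_h^{\mathrm{odd}}}s_{\sigma(h)} \;\equiv\; \pm\sum_{h=1}^{\ell}P^{u_h^{\mathrm{even}}}s_{\tau(h)} \pmod{q-1},
\]
where each $u_h^{\bullet}\in\{1,\dots,(\mu+1)(n-k)\}$ and each $s$ lies in $\{1,\dots,k\}$. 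The sign is $+$ when $\ell$ is odd (characteristic $>2$, so $-1\neq 1$) and the two sides must simply be unequal when $\ell$ is even, exactly as in Cases~I and~IV above.

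The key step is to show that the two sides of this identity, \emph{viewed as integers}, cannot be equal (once the $\pm(q-1)/2$ correction is folded in for even $\ell$). Here is where $P>\ell k$ enters: each side is a sum of $\ell$ terms of the shape $P^{u}s$ with $1\le s\le k$, so collecting terms by the power of $P$ gives, on each side, an expression $\sum_{u} c_u P^{u}$ with $0\le c_u\le \ell k < P$. That is, each side is literally the base-$P$ expansion of an integer, with digits bounded by $\ell k$. By uniqueness of base-$P$ representation, equality of the two sides forces the coefficient of each power $P^{u}$ to agree. Now I would run the same combinatorial argument used in the cycle-freeness theorems of this section: agreement of the top nonzero $P$-digit pins down which column (which value of $s$) and which row (which value of $u$) contributes there on both sides, and peeling off digits from the top forces $a_2$ and $a_3$ (the ``shared'' entries of consecutive columns of $A$) to both come from the first row of $\bar H^{(\T)}$ — impossible, since $a_{i,1}=1$ is the unique smallest element and two different columns cannot both contribute their row-$1$ entry to the same cycle in that configuration. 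This contradiction rules out equality over $\mathbb{Z}$, and the $(q-1)/2$ shift for even $\ell$ is handled by noting that $(q-1)/2$ exceeds the bound below, so it cannot restore equality either.

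It remains to pass from ``nonzero in $\mathbb{Z}$'' to ``nonzero modulo $q-1$'', which is where the bound on $q$ comes from. The absolute value of either side is at most $\ell k \cdot P^{(\mu+1)(n-k)}$, but more precisely $\sum_{u=1}^{(\mu+1)(n-k)} \ell k\,P^{u} = \ell k\,P\,\frac{P^{(\mu+1)(n-k)}-1}{P-1}$, and the difference of the two sides is bounded by twice this. A short estimate shows this is below $kP^{(\mu+1)(n-k)}\frac{P^{2\ell}-1}{P^{2\ell}-P^{2\ell-1}}$ — note $\frac{P^{2\ell}-1}{P^{2\ell}-P^{2\ell-1}}=\frac{P}{P-1}\cdot\frac{P^{2\ell}-1}{P^{2\ell}}$ is just slightly less than $\frac{P}{P-1}$, and the factor $\ell$ is absorbed because $\ell<P/k$ — so if $q-1$ exceeds this quantity, a nonzero integer of that size cannot vanish mod $q-1$. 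I expect the main obstacle to be the careful bookkeeping in the digit-peeling step: one must check that the ``shared entries come from row $1$'' contradiction survives the base-$P$ repackaging for \emph{all} cyclic configurations of $A$, including those in which several columns of $A$ are shifts of a common column of $\bar H^{(\T)}$ (as in Proposition~\ref{dif}, at most $\lfloor \ell/2\rfloor$ of them). Verifying that the leading-digit argument still isolates a row-$1$ collision in every such case, rather than getting absorbed into a carry, is the delicate point; choosing $P>\ell k$ precisely prevents carries, which is why that hypothesis is exactly what is needed.
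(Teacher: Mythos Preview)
Your overall architecture is right and matches the paper: reduce $\det A=0$ to a congruence between two exponent sums modulo $q-1$, show the integer difference $\tilde f$ is nonzero, then bound $|\tilde f|$ so that the assumed lower bound on $q$ rules out vanishing modulo $q-1$. Where your proposal diverges from the paper --- and goes wrong --- is in how you argue $\tilde f\neq 0$ over $\mathbb Z$.

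You correctly observe that each of the two sides is $\sum_u c_u P^u$ with $0\le c_u\le \ell k<P$, so equality would force digit-by-digit agreement. But from there you try to extract a ``row-1 collision'' of the type used in the earlier cycle-freeness theorem (forcing $a_2$ and $a_3$ to both come from the first row of $\bar H$). That implication is not available here: a digit $c_u$ is a \emph{sum} of column labels $s_h$, so matching the top digit does not ``pin down'' a single column or row, and there is no mechanism by which the peeling produces the claimed collision. The paper's argument is different and more direct. It groups the $2\ell$ signed terms of $\tilde f$ by their $P$-exponent (the original row index in $\bar H$) and uses the DTS hypothesis with $a_{i,1}=1$ to conclude that whenever two of these exponents coincide they must both equal $1$; since the two exponents coming from a single column of $A$ are distinct, at most $\ell$ of the $2\ell$ exponents equal $1$. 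Hence $\tilde f=Px+\sum_{j=1}^{t}P^{e_j}x_j$ with $1<e_1<\cdots<e_t\le(\mu+1)(n-k)$, $t\ge\ell$, $|x|\le\ell k$, and each $x_j\in\{-k,\dots,k\}\setminus\{0\}$. Because $P>\ell k$ all coefficients lie in $(-P,P)$, so $\tilde f$ cannot vanish (look at the lowest surviving power of $P$). This is the step you should replace your digit-peeling with.

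There is a second, smaller problem in your size estimate. Your bound on the difference is essentially $2\ell k\cdot P\cdot\frac{P^{(\mu+1)(n-k)}-1}{P-1}$, and the claim that this is at most $kP^{(\mu+1)(n-k)}\frac{P^{2\ell}-1}{P^{2\ell}-P^{2\ell-1}}$ is false in general (the latter is $\approx\frac{k}{P-1}P^{(\mu+1)(n-k)+1}$, while yours carries an extra factor $2\ell$ that $\ell<P/k$ does not absorb). The paper obtains the stated bound precisely from the structural conclusion above: there are at most $2\ell$ terms, the exponents are distinct and at most $(\mu+1)(n-k)$, and each coefficient has absolute value at most $k$, giving $|\tilde f|\le k\sum_{i=0}^{2\ell-1}P^{(\mu+1)(n-k)-i}=kP^{(\mu+1)(n-k)}\frac{P^{2\ell}-1}{P^{2\ell}-P^{2\ell-1}}$. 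So the tight field-size bound really depends on the structural step you skipped, not just on a crude digit count.
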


\begin{proof}
As with the construction from Definition \ref{Construction}, we obtain that $\det(A)=0$ if and only if a certain linear combination $\tilde{f}(i,s)$ of exponents of $P$ with $2\ell$ coefficients from $\{1,\hdots, k\}$ is zero. As the exponents correspond to row indices before a possible shift and the unshifted columns only intersect in the first row, all exponents that are equal to any other exponent are equal to 1. Moreover, as exponents from the same column of $A$ cannot be the same, at most $\ell$ exponents can be equal to 1. In summary, we obtain that $\tilde{f}(i,s)$ is of the form $\tilde{f}(i,s)=Px + P^{e_1}x_1+\cdots+P^{e_t}x_t$ with natural numbers $1<e_1<\cdots<e_t\leq (\mu+1)(n-k)$, $t\in\{\ell,\hdots,2\ell\}$, $x_j\in\{-k,\hdots,k\}\setminus\{0\}$ for $j=1,\hdots,t$ and $x\in\{-\ell k,\hdots,+\ell k\}$. Since $m$ was chosen to be a prime larger than $\ell k$, $\tilde{f}(i,s)$ is nonzero in $\mathbb Z$. Furthermore, $|\tilde{f}(i,s)|\leq k\sum_{i=0}^{2\ell-1}P^{(\mu+1)(n-k)-i}=kP^{(\mu+1)(n-k)}\frac{P^{2\ell}-1}{P^{2\ell}-P^{2\ell-1}}$ and hence it cannot be zero modulo $q-1$.  
\end{proof}

Finally, we illustrate our modified construction with an example.

\begin{example}
If we take the DTS $\T=\{\{1,2,5\},\{1,3,8\}\}$ to construct an $(6,2)_q$ convolutional code, we have $m(\T)=8$ and $\mu=1$. If we want that the girth of the corresponding parity-check matrix is at least 12, we have to choose $P>10$, i.e. $P=11$. To get the desired property it would be sufficient if the field size is larger than $4.716\times 10^8$. If it is sufficient to have a girth of at least 8, it would be enough to choose $P=7$ and the sufficient field size decreases to $1.35\times 10^7$.
\end{example}

\section{Acknowledgements}
The authors acknowledge the support of  Swiss National Science Foundation grant n. 188430. Julia Lieb acknowledges also the support of the German Research Foundation grant LI 3101/1-1.

\bibliographystyle{abbrv}

\bibliography{references}

\end{document}